\newif\ifdraft\draftfalse % draft = comments
\newif\ifanon\anontrue    % anon = light double-blind reviewing
\newif\iffull\fullfalse   % full = includes things that were cut from
\newif\ifsooner\soonerfalse % Things to look at after the CSF submission
\newif\ifcamera\camerafalse % Camera-ready version
\makeatletter \@input{texdirectives.tex} \makeatother
\def\BibTeX{{\rm B\kern-.05em{\sc i\kern-.025em b}\kern-.08em
    T\kern-.1667em\lower.7ex\hbox{E}\kern-.125emX}}
\declaretheorem{theorem}
\declaretheorem[name=Proposition, style=theorem]{proposition}
\declaretheorem[name=Definition, style=definition]{definition}
\newcommand{\squeezeup}{\vspace{-2.5mm}}
\newcommand{\todo}[1]{}
\newcommand{\hsi}[1]{\mintinline{text}{#1}}
\newcommand{\ai}[1]{{\normalfont\mintinline{text}{#1}}}
\newcommand\aim[1]{{\normalfont\text{\texttt{#1}}}} %
\newenvironment{agda}
    {\VerbatimEnvironment
      \normalfont

      \begin{minted}[escapeinside=&&]{agda}}
    {
    \end{minted}
    }
\definecolor{keyword-yellow}{rgb}{0.8, 0.6, 0.3}
\newcommand{\kw}[1]{{\color{keyword-yellow} #1}}
\definecolor{dkblue}{rgb}{0,0.1,0.5}
\definecolor{dkgreen}{rgb}{0,0.4,0}
\definecolor{dkred}{rgb}{0.6,0,0}
\definecolor{dkpurple}{rgb}{0.7,0,1.0}
\definecolor{purple}{rgb}{0.9,0,1.0}
\definecolor{olive}{rgb}{0.4, 0.4, 0.0}
\definecolor{teal}{rgb}{0.0,0.4,0.4}
\definecolor{azure}{rgb}{0.0, 0.5, 1.0}
\definecolor{gray}{rgb}{0.5, 0.5, 0.5}
\definecolor{dkgray}{rgb}{0.3, 0.3, 0.3}
\DeclareMathAlphabet{\mathit}{\encodingdefault}{\familydefault}{m}{it}
\def\Snospace~{\S{}}
\def\Nnospace~{}
\newcommand{\comm}[3]{\ifdraft{{\color{#1}[#2: #3]}}\fi}
\newcommand{\ma}[1]{\comm{dkpurple}{MA}{#1}}
\newcommand{\ch}[1]{\comm{teal}{CH}{#1}}
\newcommand*{\EG}{e.g.,\xspace}
\newcommand*{\IE}{i.e.,\xspace}
\newcommand*{\ETAL}{et al.\xspace}
\begin{document}

\newcommand{\Nat}{\mathbb{N}}
\newcommand{\ME}{\text{ME}}
\newcommand{\MEF}{\text{MEF}}
\newcommand{\project}{\operatorname{\downarrow}}
\renewcommand{\L}{\mathcal{L}}
\newcommand{\lequiv}[1]{\approx_{#1}}
\newcommand{\Pow}[1]{\mathcal P(#1)}
\newcommand{\Bool}{\texttt{Bool}}
\newcommand{\rel}[1]{\llbracket #1 \rrbracket}
\newcommand{\sem}[1]{{#1}}
\newcommand{\many}[1]{\overline{#1}}
\renewcommand{\iff}{\Leftrightarrow}
\renewcommand{\implies}{\Rightarrow}
\newcommand{\subst}[3]{[{#2}/{#1}]{#3}}
\newcommand{\closed}[1]{\texttt{closed}({#1})}
\newcommand{\return}{\texttt{return}}
\newcommand{\Sec}{\texttt{Sec}}
\newcommand{\DCC}{\texttt{DCC}}
\newcommand{\up}{\texttt{up}}
\newcommand{\True}{\texttt{true}}
\newcommand{\False}{\texttt{false}}
\newcommand{\If}{\texttt{if}}
\newcommand{\bind}{\texttt{bind}}
\newcommand{\pair}{\texttt{pair}}
\newcommand{\either}{\texttt{either}}
\newcommand{\case}{\texttt{case}}
\newcommand{\canFlowTo}{\sqsubseteq}
\renewcommand{\S}{S}
\newcommand{\D}{D}
\newcommand{\below}{\preceq}
\newcommand{\safe}{\texttt{safe}}
\newcommand{\map}{\texttt{map}}
\newcommand{\dyn}{\texttt{toDynamic}}
\newcommand{\stat}{\texttt{toStatic}}
\newcommand{\red}{\longrightarrow}
\newcommand{\lub}{\sqcup}
\newcommand{\exFalso}{\texttt{ex-falso}}
\newcommand{\cs}{{\rel{S}}}
\newcommand{\CC}{\texttt{CC}}
\newcommand{\Obj}{\texttt{F}_{\omega o}}
\newcommand{\ObjBase}{\texttt{F}_{\omega}}
\newcommand{\elim}{\texttt{-elim}}
\newcommand{\prot}{\texttt{protect}}
\newcommand{\public}{\texttt{public}}
\newcommand{\unit}{\texttt{<>}}
\newcommand{\trans}{\texttt{trans}}
\newcommand{\join}{\texttt{join}}
\newcommand{\translates}{\hookrightarrow}
\newcommand{\com}{\texttt{com}}
\newcommand{\defined}[2]{#1(#2)^\checkmark}
\newcommand{\notdefined}[2]{#1(#2)^\times}
\newcommand{\refinedBy}{\sqsubseteq}
\newcommand{\TX}{\ensuremath{\tau}}
\newcommand{\pto}{\to_p}
\newcommand{\inl}{\textit{inl}}
\newcommand{\inr}{\textit{inr}}
\newcommand{\MT}{\text{MT}}
\newcommand{\coMT}{\text{coMT}}
\newcommand{\TI}{\text{TI}}
\newcommand{\TS}{\text{TS}}
\newcommand{\Total}{\text{Total}}
\newcommand{\METI}{\text{MEST}}
\newcommand{\udef}{\text{diverge}}
\newcommand{\Alice}{\text{Alice}}
\newcommand{\Bob}{\text{Bob}}
\newcommand{\Charlie}{\text{Charlie}}
\newcommand{\LOW}{\ensuremath{{L}}}
\newcommand{\MED}{\ensuremath{{M}}}
\newcommand{\HIGH}{\ensuremath{{H}}}
\newcommand{\inputify}{\text{inputify}}
\newcommand{\outputify}{\text{outputify}}
\newcommand{\nameify}[1]{\textit{#1}}
\newcommand{\divIfPre}{\nameify{divergeIfPresent}}
\newcommand{\divIfAbs}{\nameify{divergeIfAbsent}}
\newcommand{\id}{\nameify{id}}
\newcommand{\leakAll}{\nameify{leakAll}}
\newcommand{\leakBit}{\nameify{leakBit}}
\newcommand{\terminationLeak}{\nameify{termLeak}}
\newcommand{\divOnLow}{\nameify{divergeOnLow}}
\newcommand{\salaryAverages}{\nameify{salaryAverages}}
\newcommand{\combine}{\nameify{combine}}
\newcommand{\leakLevel}{\nameify{leakLevel}}
\newcommand{\combineAll}{\nameify{combineAll}}
\newcommand{\vlinesep}[2]{\smash{\vrule width .5pt depth #1 height #2}}
\newcommand{\defAs}{\ensuremath{\triangleq}}
\newcommand{\LA}{{\mathbb{L}}}
\newcommand{\divIfEmpty}{\nameify{divergeIfEmpty}}
\newcommand{\nums}{\#}
\newcommand{\n}{\textbf{n}}
\newcommand{\ITE}[3]{\text{if}\ #1\ \text{then}\ #2\ \text{else}\ #3}
\newcommand{\SKIP}{\text{skip}}
\newcommand{\FORK}{\text{fork}}
\newcommand{\JOIN}{\text{join}}
\newcommand{\pc}{\textit{pc}}
\newcommand{\ruleName}[1]{\small\textsc{#1}}
\newcommand{\drSKIP}{\ruleName{[TSKIP]}}
\newcommand{\rSKIP}{\ruleName{TSKIP}}
\newcommand{\drITE}{\ruleName{[TITE]}}
\newcommand{\rITE}{\ruleName{TITE}}
\newcommand{\drFORK}{\ruleName{[TFORK]}}
\newcommand{\rFORK}{\ruleName{TFORK}}
\newcommand{\drASS}{\ruleName{[TASS]}}
\newcommand{\rASS}{\ruleName{TASS}}
\newcommand{\Multef}{\textit{Multef}}
\newcommand{\Labeled}{\textit{Labeled}}
\newcommand{\LIO}{\textit{LIO}}
\newcommand{\labelOf}{\textit{labelOf}}
\newcommand{\LIOlabel}{\textit{label}}
\newcommand{\unlabel}{\textit{unlabel}}
\newcommand{\toLabeled}{\textit{toLabeled}}
\newcommand{\true}{\textit{true}}
\newcommand{\false}{\textit{false}}
\newcommand{\record}{\text{record}}

\title{\bf Dynamic IFC Theorems for Free!}

\ifanon
\author{}
\else
\author{\IEEEauthorblockN{Maximilian Algehed}
\IEEEauthorblockA{
\textit{Chalmers}\\
G\"oteborg, Sweden \\
\texttt{algehed@chalmers.se}}
\and
\IEEEauthorblockN{Jean-Philippe Bernardy}
\IEEEauthorblockA{
\textit{University of Gothenburg}\\
G\"oteborg, Sweden \\
\texttt{jean-philippe.bernardy@gu.se}}
\and
\IEEEauthorblockN{C\u{a}t\u{a}lin Hri\cb{t}cu}
\IEEEauthorblockA{
\textit{MPI-SP}\\
Bochum, Germany \\
\texttt{catalin.hritcu@gmail.com}}
}
\fi

\maketitle

\begin{abstract}
We show that noninterference and transparency, the key soundness theorems for dynamic IFC libraries, can be obtained ``for free'', as direct consequences of the more general parametricity theorem of type abstraction. This allows us to give very short soundness proofs for dynamic IFC libraries such as faceted values and LIO. Our proofs stay short even when fully mechanized for Agda implementations of the libraries in terms of type abstraction.
 % <-- plain text please!
\end{abstract}

\ifcamera
\begin{IEEEkeywords}
\end{IEEEkeywords}
\fi

\section{Introduction}

The goal of information flow control (IFC) research is to develop
language-based techniques to ensure that security policies relating to
confidentiality and integrity of data are followed, by construction.
This paper is about a recent incarnation of this idea: {\em IFC as a library}.
This appealing approach, pioneered by \citet{li2006encoding} and
championed by Russo~\ETAL~\cite{MAC, LIO, CLIO, HLIO, DCCInHaskell,
FSME, buiras2013library, SecIO, StefanMMR17,vassena2018mac, VassenaBWR16}
among others~\cite{parker2019lweb, Max-DCC, PasquierBS14}, promises to
ease the integration of IFC techniques into existing software development
pipelines, by replacing the specialized languages, compilers, and runtime systems
traditionally needed for IFC applications, with libraries providing
similar guarantees.
Practically speaking, programming with an IFC library is similar to programming
with a specialized IFC language, with one exception:
Rather than being stand-alone, the library integrates with and uses the
features of its host language to provide an interface that guarantees that all
client programs are secure.

These libraries enforce IFC using two key language features:
\begin{itemize}
  \item[] {\bf Controlling side effects:} most IFC libraries are implemented in
    safe \cite{SafeHaskell} Haskell~\cite{Haskell}, a language
    that allows the library author to enforce type-based control over side effects
    (with the exception of non-termination).
  \item[] {\bf Abstraction:} all IFC libraries rely on type abstraction to provide data
    confidentiality and integrity.
\end{itemize}
But embedding IFC as a library risks leaving a gap in the soundness proofs,
which usually do not cover how the library interacts with the host language.
Indeed, the typical soundness proofs work by constructing a \emph{model} of the
library as a more-or-less standard IFC calculus, for which the authors prove
some variant of the noninterference security property~\cite{Denning}.
So although these noninterference proofs are sometimes mechanically
verified~\cite{StefanMMR17,vassena2018mac,FSME,parker2019lweb,Max-DCC}, they give no
guarantees about the realization of the calculus \emph{as a library}.

The formal connection between the calculus and the library, which relies on host
language type abstraction, was so far almost never investigated.
% almost, because there is Algehed and Bernardy [24] who studied exactly this
%
% This can have important security consequences.
% %
% For instance, \citet{stefan2012addressing} claimed that their concurrent LIO
% Haskell library for dynamic IFC protects against so-called ``internal timing attacks''
% (attacks where the program itself can observe and exploit secret-dependent timing
% information).
% %
% However, \citet{buiras2013lazy} later demonstrated that this was not
% the case: Haskell's lazy evaluation permits reliably bypassing the
% mechanisms of \citeauthor{stefan2012addressing}, enabling reliable
% internal timing attacks \cite{buiras2013lazy, VassenaBR17}.
% \jpb{Feels like a non-sequitur, (overselling) because we do not, and
%   (as far as I can see) cannot, deal with timing attacks.}
% \ma{We can adjust our claim, but the point still stands that not
% reasoning about the interaction of the library and the host language
% causes actual problems. We take a step in the direction to solve this
% problem, but we don't address that particular issue.}
%
% And while we do not yet deal with concurrency and timing attacks
We are the first to provide formal proofs {\em explicitly covering the use
  of type abstraction for implementing dynamic IFC libraries}, such as LIO.
%
% \ma{Maybe we need to make the argument here that the parametricity work is not
% ready yet} -- CH: we do
%
Moreover, the key to scaling security proofs to other advanced features of
dynamic IFC libraries is employing better proof techniques.

% \ch{improving transition; in order to be able to scale up; the big challenge
%   in this space is good proof techniques, which we provide}

In this paper we provide simpler proofs for Agda~\cite{Agda} implementations in
terms of type abstraction of two different kinds of dynamic IFC libraries.
On the one hand, we study the sequential LIO library~\cite{LIO}, in
which individual values can be labeled with metadata specifying their
confidentiality and integrity levels and computations carry a ``current label''
that soundly over-approximates the level of already inspected labeled values.
On the other hand, we study a library based on {\em faceted
values}~\cite{austin2014typed,FSME}, which are decision trees that can evaluate
to different values based on the privilege level of the observer.
These two styles reflect the most common ways to enforce dynamic IFC as a library.

To prove noninterference in a simple way, we give semantics to the libraries in
terms of {\em logical relations}.
Every type $T$ induces a binary \iffull logical \fi relation $\rel T$, such that every
well-typed program $p : T$ is related to itself with respect to this relation.
This connection between terms, types, and logical relations is
called the \emph{fundamental lemma} of logical relations, or the
\emph{abstraction theorem}~\cite{reynolds_types_1983}, or
% when useful logical relations are induced by quantification over all types,
% \ch{So should we be calling it parametricity or not,
%   given that we are never explicit whether what we do falls in this case or
%   not?} -- this was just adding confusion
\emph{parametricity}.
In his seminal work, \citet{reynolds_types_1983} uses this technique to show
that users of an abstract type can never observe the details of its
implementation.
In this paper we apply this idea to dynamic IFC, showing that noninterference
for dynamic IFC libraries is a direct consequence of the same parametricity
theorem.\footnote{Theorems directly obtained from parametricity are often called
  {\em free theorems}~\cite{WadlerParametricity}, hence the title of this paper.}

In practice, we implement the dynamic IFC libraries in a language with dependent
types.
This allows us to program our libraries and their clients in the same
formalism we use to reason about the security of such programs.
We also use the fact that (dependent) types are turned mechanically to
logical relations and programs are turned mechanically to proofs of
satisfaction of such relations, and everything remains neatly
expressible within the framework of a language with
dependent types~\cite{ParametricityJFP}.
This way, we do not need to prove any fundamental lemma for our
logical relations, and additionally the junction between
implementation and theory is watertight.

This proof technique was recently applied by
\citet{SimpleNoninterferenceFromParametricity} to show the noninterference of
\emph{static} IFC libraries.
The work in this paper differs in two crucial ways:
\begin{enumerate}
  \item We show that the same technique can be applied to prove noninterference
    for {\em dynamic} IFC libraries. While this might seem counter-intuitive at first, since
    parametricity is a property of type abstraction, which is a static
    enforcement mechanism, our work shows formally that dynamic IFC libraries
    ultimately achieve their security also from type abstraction.
% \ch{This might seem surprising to some people (since type abstraction is
%       a very {\em static} thing), but really it shouldn't be, since the same
%       abstraction mechanisms are used for both static and dynamic IFC
%       libraries.}
% %
%     \ma{Something to guide the reader in terms of the relationship to static
%       IFC, they expect parametricity and static IFC to be related, why should it
%       work for dynamic? Same underlying ideas!}
% %
%
  \item We show that the same proof technique can be used to prove more than
    just noninterference.
    Specifically, we use the parametricity theorem to also prove that our faceted
    values library is {\em transparent}~\cite{MF, FSME}, ensuring that programs
    that are already secure do not have their semantics altered by our library.
%
  % \item \ch{We make the argument/presentation completely water tight by
  %     embedding everything in Agda. By comparison,
  %     \citet{SimpleNoninterferenceFromParametricity} talk about Haskell and CC
  %     but still encode everything into Agda ... not sure this is a great
  %     conceptual difference/advancement though? Just a presentation thing!}
\end{enumerate}

More importantly, we inherit the simplicity of the parametricity-based
proof technique.
In particular, our completely mechanized proof of noninterference for
a type abstraction-based implementation in Agda~\cite{Agda}
of the state-of-the-art LIO library is an order of magnitude shorter
than the previous partial Coq proof for
a model of the same library~\cite{StefanMMR17}.
Moreover, the simplicity of our proof technique allows us to cover \emph{more} of
our implementation of the LIO library than the previous Coq proof.
Specifically, our mechanized LIO proof also covers the non-trivial mechanisms that enable
soundly recovering from IFC and user-thrown exceptions~\cite{StefanMMR17, hritcu2013all}.

Concretely, we make the following {\bf contributions}:
\begin{itemize}
  \item We give Agda implementations of two core dynamic IFC libraries:
\ifsooner
    \ch{It would be a ton stronger not to have to add core in front of LIO!
      How hard would it be to also add state to LIO using the ideas of Section 6?}%
    \ma{To answer that question: it's doable but at the same time may be impossible. LIO
    allows you to have references for arbitrary types, whereas I (maybe JP knows?) don't
    know how to get that to work without runtime support}
\fi
    LIO~\cite{LIO, StefanMMR17} (\cref{ni-lio}) and a faceted values library
    inspired by the Faceted \cite{austin2014typed} and Multef~\cite{FSME}
    Haskell libraries (\cref{DIFC-as-a-library}).
  \item We use the parametricity theorem (\cref{sec:parametricity}) for these
    dynamic IFC libraries to provide simple noninterference proofs for any client of
    each library (\cref{sec:NIproofs}).
    %
    % We are the first to apply this proof technique to dynamic IFC
    % libraries,\jpb{First to use logical relations at all, or first to leverage
    %   the parametricity for dependent types framework?} -- duplicate
    %
    % which results in fully mechanized proofs that are one order of magnitude
    % shorter than in previous work. -- duplicate
    %
  \item We use the same proof technique to also show that our faceted values
    library satisfies transparency~\cite{MF, FSME}, which ensures that
    this library does not alter the semantics of already secure programs.
    (\cref{sec:transparency}).
    %
  %% \item Our core libraries can be soundly extended to side effects without
  %%   additional proof obligations; we illustrate this by adding mutable state to
  %%   our core faceted values library (\cref{sec:effects}).
  %%   %
  %%   The key idea is simple \cite{DCCInHaskell, LIO}: the noninterference of any
  %%   client of our core libraries, also holds for any client {\em library}.
  %%   %
  %%   In particular, a client library implementing side effects using only the
  %%   primitives exposed by the core library automatically enjoys noninterference.
  %%   %
  %%   % \ch{All this is too abstract and lengthy. More concretely, we extend our
  %%   %   faceted values library and its noninterference proof with state this
  %%   %   way. And it would be great if we could do the same for LIO.}
  %%   %
  \item Finally, all our proofs have been fully mechanized in the Agda proof
    assistant, and are available as supplementary material for this \ifanon
    submission.\else paper.\footnote{Supplementary material available at
      \href{https://github.com/MaximilianAlgehed/DynamicIFCTheoremsForFree}
           {\tt https://github.com/\\ MaximilianAlgehed/DynamicIFCTheoremsForFree}}\fi{}
    As a consequence of the use of parametricity for dependent types,
    the implementation of our libraries and the proofs are surprisingly short:
    our whole formal development for the two different libraries is less than
    one thousand lines of Agda.
\end{itemize}

While we use Agda as our language of choice in this paper, since it has a well
worked out theory of parametricity (a generalization of
\autoref{thm:abstraction} was already proved by \citet{ParametricityJFP}), we
expect that
our results are easily to apply to other languages with strong
abstraction mechanisms and dependent types, like Coq and F$^\star$.

% LocalWords:  li 

\section{\label{DIFC-as-a-library}Dynamic IFC as a Library}
\begin{figure}
  \centering
\begin{agda}
module MultefImplementation where
  data Fac : Set → Set where
    return : {A : Set} → A → Fac A
    facet  : {A : Set} → Label → Fac A → Fac A
                       → Fac A

  bind : {A B : Set} → Fac A → (A → Fac B) → Fac B
  bind (return a) c = c a
  bind (facet ℓ f₀ f₁) c = facet ℓ (bind f₀ c) (bind f₁ c)
\end{agda}
  \caption{\label{fig:multef} Faceted values part of Multef in Agda}
\end{figure}

Before turning our attention to embedding dynamic IFC as a library, we give
a quick primer on our host language, Agda: a total functional programming
language with dependent types~\cite{Agda}.

One characteristic of Agda is that terms, types, and propositions are unified.
Thus, a single dependent arrow type-former (\ai{->}) can be used to define
function types and universal quantification.
For example the type \ai{(x : Bool) -> Bool} is a function type whose domain and
co-domain are Booleans.
The type \ai{(x : Bool) -> x ≡ true ∨ x ≡ false} is an example of a
quantified proposition over booleans.
The type \ai{(A : Set) → List A → List A} is that of a polymorphic
list-transformation.
This last example illustrates quantification over types, which is
seen as a dependent function whose domain is the type of types,
written \ai{Set} in Agda.
(To avoid logical inconsistencies sets of sets are organized in a tower, such
that \ai{Setᵢ : Setᵢ₊₁}, and \ai{Set} is a shorthand for \ai{Set₀}.)
% \ch{interesting, in Coq \ai{Type} is a shorthand for \ai{Typeᵢ}, for some polymorphic i}

As a convenience feature, Agda offers \emph{implicit arguments}.
If a function domain is written with braces, for example \ai{f : {A : Set} →
List A → List A}, then when calling the function \ai{f} one will omit the
corresponding argument, for example \ai{f someList : List A}.
In this situation, Agda will try to infer the omitted argument from the
context. For example if \ai{someList : List Bool} then \ai{A = Bool}.
If doing so is impossible or ambiguous, Agda will report an error.

While Agda has many more features, we will only use a small
time-tested subset.
We also make use of records, which will be discussed below, and some
limited use of data types, that work like in other functional languages.

Our first library is based on \emph{faceted values}
\cite{austin2014typed,MF} and is displayed in \Cref{fig:multef}.
This is a straightforward port of the faceted values part of
the Multef Haskell library~\cite{FSME} to Agda.
Faceted values are binary decision trees that can evaluate to different values
based on the privilege level of the observer.
Formally, a faceted value \ai{f : Fac A} can either be a regular value \ai{v :
A} that does not depend on who is observing it, in which case \ai{f = return
v}, or it can depend on who is observing it, in which case it is a tree node
containing a label \ai{ℓ : Label} (we assume a base-type \ai{Label} of security
labels explained below) and children \ai{f₀} and \ai{f₁} of type \ai{Fac A}
(\IE\ \ai{f = facet ℓ f₀ f₁}).
In the code in \Cref{fig:multef}, this tree structure is formalised by the
\ai{data … where …} declaration.
Similarly to other functional languages like Haskell, it says that
\ai{Fac : Set → Set} is a type-constructor mapping types to types, and that it has two
constructors \ai{return} and \ai{facet}.

Returning to the meaning of \ai{facet ℓ f₀ f₁}, if an observer has access to
level \ai{ℓ} they will see \ai{f₀}, otherwise \ai{f₁}.
For example, the faceted value \ai{facet Alice (return 0) (return 1) : Fac Int} looks like
\ai{0} to anyone who is able to observe values with \ai{Alice}'s
confidentiality level, and looks like \ai{1} to everyone else.

Combining two faceted values also yields a faceted value.
For example, if we add \ai{f₀ = facet Alice (return 0) (return 1)} and
\ai{f₁ = facet Bob (return 2) (return 3)}, we get the \emph{recursive} faceted value:
\begin{agda}
f₀+f₁ = facet Alice
         (facet Bob (return (0 + 2)) (return (0 + 3)))
         (facet Bob (return (1 + 2)) (return (1 + 3)))
\end{agda}
By inspecting this value, we see that if the observer can observe \emph{both}
\ai{Alice}'s and \ai{Bob}'s data, they observe the sum \ai{f₀ + f₁} as
\ai{0 + 2}, whereas an observer who can see only \ai{Bob}'s data, \emph{not}
\ai{Alice}'s, sees the sum as \ai{1 + 2}.

To formally define this addition operation on faceted values we follow the
literature \cite{FSME,austin2014typed} and show that faceted values form a
\emph{monad} \cite{moggi1991notions,wadler1995monads}, which provides a general
computational tool we can use to easily define operations like addition.
A monad is a triple of a type former, \ai{M : Set → Set}  (here \ai{M} = \ai{Fac}), and two
operations, \ai{return : {A : Set} → A → M A} that takes a pure value and
embeds it into a monadic value, and \ai{bind : {A B : Set} → M A → (A → M B) →
M B} that takes a monadic value \ai{m : M a} and a continuation \ai{c : A →
M B} and produces a new monadic \ai{bind m c : M B}.\footnote{A
monad requires \ai{bind} and \ai{return} to satisfy certain \emph{algebraic laws}.
However, these are orthogonal to our development and thus omitted.}

In \Cref{fig:multef}, \ai{return} is a constructor of the \ai{Fac} type, and \ai{bind}
is defined as a function using pattern-matching.
If \ai{f = return a} then \ai{bind f c = c a}, and if \ai{f = facet ℓ f₀ f₁} then
\ai{bind f c} is defined recursively as \ai{facet ℓ (bind f₀ c) (bind f₁ c)}.
In effect, this means that in \ai{bind f c} the continuation \ai{c} is computed
once for every leaf of the faceted tree \ai{f}, potentially producing bigger
faceted trees to replace the old leaves.

Using the monad operations, we can define addition and similar operations for
\ai{Fac} easily in the following manner:
\begin{agda}
(+) : Fac Int → Fac Int → Fac Int
fx + fy = bind fx λ x →
          bind fy λ y →
          return (x + y)
\end{agda}
The syntactic form \ai{λ x → …} in the code above is simply Agda
notation for lambda abstraction.
%
%% Agda provides us with a syntactic convenience for monadic computations in the
%% form of so-called do-notation.
%% %
%% The code above can be written as:
%% %
%% \begin{agda}
%% (+) : Fac Int → Fac Int → Fac Int
%% fx + fy = &\kw{do}&
%%   x ← fx
%%   y ← fy
%%   return (x + y)
%% \end{agda}
%% %
%% Here, the syntax \ai{x ← e} corresponds to the use of \ai{bind}, following the
%% desugaring rules listed in \cref{fig:do-notation}.
%% %
%% We will use this do-notation in the rest of the paper.
%% 
%% \begin{figure}
%% \begin{agda}
%%              desugar      = 
%%              &\kw{do}& x ← m    | bind m λ x →
%%                 c         | desugar (&\kw{do}& c)
%%              
%%              desugar      =
%%              &\kw{do}& m         | bind m \ _ →
%%                 c         | desugar (&\kw{do}& c)
%%              
%%              desugar      =
%%              &\kw{do}& let x = e | let x = e in
%%                 m         | desugar (&\kw{do}& m)
%% \end{agda}
%%   \caption{\label{fig:do-notation} Rules for desugaring Agda's do-notation}
%% \end{figure}

As illustrated by the addition example above,
the \ai{bind} operation from \cref{fig:multef} works by traversing faceted
values in its operands and constructing a recursive faceted value, taking into
account all possible observers.

Following the IFC literature \cite{sabelfeld2003language},
whether a level can be observed from another level is given by a
partial order \ai{_⊑_ : Label → Label → Bool}.
This order is also required to form a join semi-lattice, and thus it has a
least element \ai{ℓ⊥ : Label} and the least-upper-bound always exists and is
given by the function \ai{_⊔_ : Label -> Label -> Label}.
Using this lattice structure, we can define \ai{project l f},
intuitively what a value of type \ai{f : Fac a} will ``look like'' to
an observer at level \ai{l}.
We define \ai{project l f} by recursion on~\ai{f}:
\begin{agda}
project : {A : Set} → Label → Fac A → A
project ℓ (return a)       = a
project ℓ (facet ℓ' f₀ f₁) =
  &\kw{if}& ℓ' ⊑ ℓ &\kw{then}&
    project ℓ f₀
  &\kw{else}&
    project ℓ f₁
\end{agda}

In turn, from the definition of \ai{project}, we can define what it means for a
program, for example a function \ai{p : Fac Int → Fac Int}, to be secure.
The program \ai{p} is \emph{noninterfering} if given any label \ai{ℓ : Label}
and two faceted values \ai{f₀, f₁ : Fac Int} such that \hsi{project ℓ f₀ ≡ project ℓ f₁} we
have that \ai{project ℓ (p f₀) ≡ project ℓ (p f₁)}.
In other words, \ai{p} does not reveal information from a different security
level to an observer at level \ai{ℓ}.

\begin{figure}
\begin{agda}
module Temp where

data HotOrCold : Set where
  Hot  : HotOrCold
  Cold : HotOrCold

isCold : Fac Int → Fac HotOrCold
isCold fint = bind fint λ x →
  &\kw{if}& x > 25 &\kw{then}&
    return Hot
  &\kw{else}&
    return Cold
\end{agda}
  \caption{\label{fig:Temp} A client of the Multef library}
\end{figure}

To see an example of this property in action, consider the \hsi{Temp} client module in \cref{fig:Temp}.
When we give \hsi{isCold} the arguments \hsi{f₀} and \hsi{f₁} defined as:
\begin{agda}
f₀ f₁ : Fac Int
f₀ = facet Alice (return 10) (return 0)
f₁ = facet Alice (return 30) (return 0)
\end{agda}
we get:
\begin{agda}
isCold f₀ = facet Alice (return Cold) (return Cold)
isCold f₁ = facet Alice (return Hot) (return Cold)
\end{agda}
If the observer level \ai{ℓ} is \ai{Bob}, who cannot see \ai{Alice}'s data
(\IE{} \ai{Bob ̸⊑ Alice}) we have that
\ai{project f₀ Bob ≡ project f₁ Bob} and also
\ai{project (isCold f₀) Bob ≡ project (isCold f₁) Bob}.

\begin{figure}
\begin{agda}
isAlicePos : Fac Int → Fac Bool
isAlicePos (facet Alice (return n) f) = return (n > 0)
isAlicePos f = return False
\end{agda}
  \caption{\label{fig:Bad} An ill-behaved client}
\end{figure}

The goal of library-based IFC is to ensure that all client code behaves
securely, as \ai{isCold} does.
However, suppose that we could write the code in \cref{fig:Bad}.
Function \ai{isAlicePos} uses pattern matching to check whether its faceted
input is precisely of the form \ai{facet Alice (return n) f}, for some \ai{n}
and \ai{f}, and if so returns a raw (un)faceted value \ai{return (n > 0)}.
Otherwise \ai{isAlicePos} returns \ai{return False}.
The function \ai{isAlicePos} clearly breaks noninterference.
What goes wrong here is that even though we carefully ensure that the functions
\ai{facet}, \ai{return}, and \ai{bind} respect noninterference, the
\ai{isAlicePos} client function could just side-step our interface and break up
faceted values to look directly at \ai{Alice}'s secrets.

To ensure security, therefore, we must use the abstraction mechanisms of the
underlying programming language \cite{cardellimodules}.
In the case of Agda, this means two things.
First, we must define the % exported % JPB: I don't think that an interface can be "not exported".
\emph{interface} of the IFC library,
as a record type.
In the case of core Multef, the interface can be found in \cref{fig:multef:interface}.

\begin{figure}
\begin{agda}
record MultefInterface where
  field
    Fac : Set → Set
    return : {A : Set} → A → Fac A
    facet  : {A : Set} → Label → Fac A → Fac A
                       → Fac A
    bind : {A B : Set} → Fac A → (A → Fac B)
                       → Fac B
\end{agda}
  \caption{\label{fig:multef:interface} The abstract interface to the Multef library}
\end{figure}

Second, any client can depend \emph{only} on this interface.
For Agda, this is ensured by parameterizing the client modules by the IFC library interface,
for example, the following will make the \ai{BadClient} module parametric:
\begin{agda}
module BadClient (imp : MultefInterface) where
open MultefInterface imp

isAlicePos : Fac Int → Fac Bool
…
\end{agda}
Then Agda will not allow us compile this module because of the \ai{isAlicePos}
function: we get an error telling us that \ai{Fac} is an abstract type, and
does not expose a constructor \ai{facet} on which we can pattern-match (in the
interface, \ai{facet} is just a \emph{function} and not a datatype constructor).
%
% All client modules are forced to use only the interface to \hsi{Fac} that's
% exported by the \ai{MultefInterface} record type, namely the functions
% \hsi{facet}, \hsi{return}, and \hsi{bind}. \jpb{Redundant?} CH: +1
%
Finally, we must make sure that our secure implementation,
which we will call \ai{sec}, indeed implements the interface.
In Agda syntax:
\begin{agda}
sec : MultefInterface
sec = record {MultefImplementation}
\end{agda}
In \cref{sec:NIproofs}, we will use this abstraction barrier to provide a
\emph{relational interpretation} for the Multef interface (in addition to that
of the LIO library \cite{LIO}) and a proof that the implementation satisfies
this relational interpretation.
By the abstraction theorem, all client modules will be proved to be
noninterfering.

% LocalWords:  someList fy HotOrCold isCold fint isAlicePos un
% LocalWords:  illbehaved parameterizing BadClient

\section{\label{sec:parametricity}Parametricity and Data Abstraction}
Before we dive into proving noninterference for our libraries,
we give a quick primer on constructive reasoning in Agda.
Using the propositions-as-types (also known as the Curry-Howard) correspondence
\cite{howard1980formulae, Wadler15}, we can interpret Agda types as propositions and
programs as proofs in constructive logic:

The type of types is \ai{Set} and inhabitants of this type can also be seen as
propositions.
Type \ai{⊤ : Set} is inhabited by a trivial \ai{tt : ⊤} inhabitant, and thus
represents Truth as a proposition:
\begin{agda}
data ⊤ : Set where tt : ⊤
\end{agda}
Conversely \ai{⊥ : Set} is not inhabited by any term and thus represents Falsity.
Conjunction \ai{_∧_ : Set → Set → Set} and disjunction \ai{_∨_ : Set → Set → Set}
are implemented as product and sum types, respectively.
Finally, as also explained in the previous section, quantification and
implication correspond to the dependent function type \ai{(x : A) → B}.

Since types are propositions, their inhabitants are proofs.
Concretely, if \ai{P : Set} is a proposition (type) then
\ai{t : P} is a proof (inhabitant) of the proposition (type).
For example, the canonical proof of proposition \ai{A ∧ B → B ∧ A} is the
following function:
\begin{agda}
swap : A ∧ B → B ∧ A
swap (a , b) = (b , a)
\end{agda}

With this background in place, we now introduce parametricity and data
abstraction for dependently typed languages.
This proof technique is based on logical relations~\cite{plotkin1973logical,
  Statman85}, which are an elegant tool to prove properties about programming
languages, and in particular IFC.
The key idea is that one interprets every type as a relation.
For every type \ai{A}, one builds a relation \ai{⟦A⟧} ---
thus \ai{⟦A⟧ a₀ a₁} is a proposition given two values \ai{a₀ a₁ : A} and
so \ai{⟦A⟧ : A → A → Set}.
 % (the encoding of a relation in a typed lambda calculus).
%
One then proves the fundamental lemma of logical relations, also known as
abstraction theorem, or parametricity theorem:

\begin{proposition}[Parametricity]
  \label{prop:parametricity}
  If \ai{t : A} then \ai{⟦A⟧ t t}.
  That is, every program \ai{t} of type \ai{A} satisfies the relational
  interpretation of its type \ai{⟦A⟧}.
\end{proposition}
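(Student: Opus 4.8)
The plan is to establish this statement — the fundamental lemma of logical relations — by structural induction on the well-typed term \ai{t}, in the style pioneered by \citet{reynolds_types_1983}. First I would pin down the defining clauses of the relational interpretation $\rel{\cdot}$ by recursion on types: a base type is sent to a fixed relation (propositional equality will do); a dependent function type is interpreted so that $\rel{(x : A) \to B}\ f\ g \;=\; (a_0\ a_1 : A)\,(a_r : \rel{A}\ a_0\ a_1) \to \rel{B}\ (f\ a_0)\ (g\ a_1)$, where the codomain interpretation may mention the witness $a_r$; products and sums are interpreted componentwise; and — the clause that makes parametricity powerful — the universe is interpreted as the type of relations, $\rel{\mathsf{Set}}\ A_0\ A_1 \;=\; A_0 \to A_1 \to \mathsf{Set}$. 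With these clauses fixed, $\rel{A}\ t\ t$ is a well-formed proposition for every typing judgement, and the theorem asserts that it is always inhabited.

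Second, I would run the induction itself, with one case per term former. Since variables are discharged from relational assumptions about the context, I would first generalize the statement to open terms — a term in context $\Gamma$ should map relationally-related environments to relationally-related results — and recover the closed statement as the empty-context instance. Lambda abstraction assumes a related pair of arguments and applies the induction hypothesis to the body; application feeds the argument witness from one hypothesis into the function witness from the other, using exactly the defining clause for $\rel{(x : A) \to B}$; and the constructors and pattern-matching eliminators of the data types we use (\ai{Fac}, sums, products, \ai{Bool}) are discharged by unfolding the interpretation against the matching case analysis.

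The hard part will be the interaction of the universe case with dependency. The clause $\rel{\mathsf{Set}}\ A_0\ A_1 = A_0 \to A_1 \to \mathsf{Set}$ is exactly what lets a quantified type variable be instantiated at an \emph{arbitrary} relation, and it is this freedom that will later force any client of an abstract type to behave uniformly — the formal seed of the noninterference arguments in \cref{sec:NIproofs}. Making it precise in a dependently typed setting requires a substitution-coherence lemma, showing that $\rel{B[a/x]}$ agrees with $\rel{B}$ instantiated at the relational witness for \ai{a}, together with careful universe-level bookkeeping, since $\rel{\cdot}$ lands one level above the type it interprets. Rather than redo this delicate induction by hand for each library, I would lean on the fact that Agda's type theory already validates an internalized parametricity translation: by \citet{ParametricityJFP}, the map sending each type to its relational interpretation and each term \ai{t : A} to a proof of $\rel{A}\ t\ t$ is itself a type-preserving syntactic translation, so the fundamental lemma holds \emph{by construction} and never has to be re-proved. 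This is precisely what lets us wield \cref{prop:parametricity} as an off-the-shelf tool in the sequel.
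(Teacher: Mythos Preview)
Your proposal is correct and aligns with the paper's treatment: the paper does not prove \cref{prop:parametricity} directly either, but instead presents the relational interpretation of types (\cref{def:rel-types}) and of terms (\cref{def:rel-terms}) and then states the result as \cref{thm:abstraction}, citing \citet{ParametricityJFP} for the proof that the term translation $\rel{t}$ inhabits $\rel{A}\ t\ t$. Your sketch of the underlying induction---generalizing to open terms, handling each term former, and isolating substitution-coherence and universe bookkeeping as the delicate points---is exactly the content of that cited proof, and you arrive at the same conclusion the paper relies on: the fundamental lemma is obtained once and for all from the type-preserving syntactic translation, so nothing needs to be re-proved per library.
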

One often uses a custom logical relation \cite{NoninterferenceForFree},
but there is a general, most fundamental way to interpret dependent types as
relations, given by \citet{ParametricityJFP}, which we adapt below for the
syntax of Agda types.\footnote{Their theory is for pure type systems with
  inductive families, covering all the features of Agda that we use in this paper.}

\begin{definition}(Relational interpretation of types)
  \label{def:rel-types}
  This meta-level definition works by induction on the structure of types. % $A$. not defined nor used
  \begin{align*}
    \rel {\ai {Setᵢ}}\ A₀\ A₁ & = A₀ → A₁ → \ai{Setᵢ}\\
    \rel {\aim {(x:\(A\)) → \(B\)}}\ f₀\ f₁ & =\aim{(x₀ : \(A₀\))}\\
                                            &\aim{→ (x₁ : \(A₁\))}\\
                                            &\aim{→ (xᵣ : \(\rel A\) x₀ x₁)}\\
                                            &\aim{→ \(\rel B\)\ \((f₀\ x₀)\ (f₁\ x₁)\)}\\
    \rel {\aim{record field fⁱ} : Aᵢ}\ r₀\ r₁ & = \aim {record field}\\
                                              &\ \ \ \ \ \ \aim{f}^i_r : \rel{Aᵢ}\ (\aim{fⁱ}\ r₀)\ (\aim{fⁱ}\ r₁)\\
    \rel {\textbf{B}}\ b₀\ b₁         & = \aim{b₀ ≡ b₁}
  \end{align*}
\end{definition}
As mentioned above, the type of types (\ai{Set}) is interpreted as a function from two types
to \ai{Set} (\IE a relation).
A function type is interpreted as a relation requiring that inhabitants map related
arguments to related results.
In particular, the dependent function type \ai{(x : A) → B} binds \ai{x} as a
variable in \ai{B}.
The single bound variable \ai{x} is turned into three bound variables in the translated type,
\ai{x₀} and \ai{x₁} that bind elements of \ai{Aᵢ} respectively, and \ai{xᵣ} that binds a proof
that \ai{x₀} and \ai{x₁} are related by \ai{⟦A⟧}.
A record type is interpreted as a relation relating two instances of the record
by relating all their fields.
Finally, base-types (\textbf{\ai{B}}), like booleans, are interpreted as
propositional equality.
In short, propositional equality at the type \ai{A} is a type \ai{≡} with a
single inhabitant \ai{refl : {a : A} → a ≡ a}.
Technically, this means that Agda will only equate two terms \ai{a₀, a₁ : A} if
it can prove that they both reduce to some term \ai{a : A}.

To prove the fundamental lemma, one proceeds by giving a
relational interpretation $\rel t$ for every term $t$:
In the development of \citet{ParametricityJFP} that we use in this paper, this
interpretation is as follows.
\begin{definition}{Relational interpretation of terms}
  \label{def:rel-terms}
\begin{align*}
  \rel {t\ u}                    & = \rel t\, u₀\, u₁\, \rel u \\
  \rel {\aim{λx → \(t\)}}        & = \aim {λx₀ → λx₁ → λxᵣ →} \rel t \\
  \rel{\aim{record \{fⁱ = tᵢ\}}} & = \aim{record \{f}^i_r\aim{ = \(\rel {tᵢ}\)\}} \\
  \rel{\aim{fⁱ}\ t}              & = \aim{f}^i_r\ \rel{t}\\
  \rel {\aim{x}}                 & = \aim{xᵣ}\\
  \rel {\textbf{b}}              & = \aim{refl}\\
  \rel A                         & = \aim {λa₀ → λa₁ →} \rel A\ \aim{a₀}\ \aim{a₁}
\end{align*}
\end{definition}

This interpretation mimics the behavior of the relational interpretation
of types.
In particular, if the term is a base type constant (the penultimate case) then
the proof of relatedness is simply reflexivity of equality, and if the term is
a type (the last case), we construct an explicit relation and fall back to the
interpretation for types.
Thus, the interpretation of types as relation and the
interpretation of terms as proofs can be unified, hence the use of a
single notation $⟦·⟧$ for both purposes.
Moreover, the translation of function types is mimicked in the translation of lambda
terms, a single bound \ai{x} is turned into three bound \ai{x₀}, \ai{x₁}, and \ai{xᵣ}.
Finally, each use of a variable \ai{x} is turned into a use of the bound proof
\ai{xᵣ} that \ai{x₀} and \ai{x₁} are appropriately related.
We refer the reader to \citet{ParametricityJFP} for details.

This relational interpretation of terms and types provides an once and for all
proof of the parametricity theorem:
\begin{theorem}(Parametricity \cite{ParametricityJFP})
  \label{thm:abstraction}~\\
  If \ai{t : A} then \ai{⟦t⟧ : ⟦A⟧ t t}.
\end{theorem}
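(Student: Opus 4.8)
The plan is to prove the statement by induction on the structure of the term $t$ (equivalently, on its typing derivation), checking for each term former that the corresponding clause of \cref{def:rel-terms} inhabits precisely the type demanded by \cref{def:rel-types}. The two definitions are set up in lockstep --- each term former mirrors its type former --- so the proof is largely a matter of unfolding both definitions and confirming the types agree case by case. Throughout, I work with a translated context in which every binding $x : A$ has been replaced by the three bindings $x_0 : A_0$, $x_1 : A_1$, and $x_r : \rel{A}\,x_0\,x_1$, exactly as the function clause of \cref{def:rel-types} prescribes.

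First I would dispatch the cases that require no dependency reasoning. For a variable $x$, we have $\rel{x} = x_r$ by definition, and $x_r : \rel{A}\,x_0\,x_1$ is available directly from the translated context, so this case is immediate. For a base-type constant $b : \mathbf{B}$, \cref{def:rel-types} gives $\rel{\mathbf{B}}\,b\,b = (b \equiv b)$, which is inhabited by $\rel{b} = \aim{refl}$. For a lambda abstraction $\lambda x.\,t$ of type $(x : A) \to B$, the induction hypothesis gives $\rel{t} : \rel{B}\,t\,t$ in the context extended by $x_0, x_1, x_r$; abstracting over these three variables yields a term whose type, after unfolding the function clause, is exactly $\rel{(x : A) \to B}\,(\lambda x.\,t)\,(\lambda x.\,t)$. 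The record-construction and record-projection cases go through identically, using the record clauses of the two definitions, and the case where $t$ is itself a type falls back to the type interpretation via the last clause of \cref{def:rel-terms}.

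The application case is where the real work lies. If $t : (x : A) \to B$ and $u : A$, then $t\,u : B[u/x]$, and $\rel{t\,u} = \rel{t}\,u_0\,u_1\,\rel{u}$. By the induction hypotheses, $\rel{t}$ has the unfolded function type, so feeding it $u_0$, $u_1$, and $\rel{u} : \rel{A}\,u\,u$ produces a term of type $\rel{B}\,(t\,u_0)\,(t\,u_1)$ with $x_0, x_1, x_r$ instantiated to $u_0, u_1, \rel{u}$. What we are obliged to deliver, however, is $\rel{B[u/x]}\,(t\,u)\,(t\,u)$. Bridging this gap requires a substitution lemma asserting that the relational interpretation commutes with substitution, namely that $\rel{B[u/x]}$ equals $\rel{B}$ with $x_0, x_1, x_r$ replaced by $u_0, u_1, \rel{u}$. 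I expect this to be the main obstacle: it is precisely the point where dependency matters, since in a non-dependent system $B$ could not mention $x$ and the lemma would be trivial. The substitution lemma itself is proved by a separate, mutually-recursive induction over the structure of types and terms, and a companion to it is also what justifies the definitional $\beta$-steps silently used in the lambda and record cases above.

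Rather than reproduce this development in full, for the purposes of this paper I would appeal to \citet{ParametricityJFP}, who carry out exactly this induction --- substitution lemma and all --- for pure type systems with inductive families, a setting that subsumes the fragment of Agda we use. Crucially, their account is internal: the relational interpretation of a term is itself produced as a well-typed term of the ambient theory. This is what lets us reuse the theorem mechanically in \cref{sec:NIproofs}, turning each client program directly into a proof that it satisfies the relational interpretation of its type, without ever having to discharge the fundamental lemma by hand.
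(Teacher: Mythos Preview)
Your proposal is correct, and in fact more detailed than what the paper does: the paper does not give a proof of \cref{thm:abstraction} at all, but simply states the theorem with a citation to \citet{ParametricityJFP} and moves on. Your final paragraph does exactly the same, appealing to that reference for the full development; the induction sketch you supply beforehand (including the identification of the substitution lemma as the crux of the application case) is sound and matches the standard structure of that proof, but it is extra material that the paper chose to elide.
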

%
% One subtlety is that for \ai{t : A}, the relation \ai{A} relates
% \ai{t₀} and \ai{t₁}.
%
% These terms are obtained replacing every free variable \ai{x : B} in \ai{t} by
% \ai{x₀} and \ai{x₁} respectively and requiring that \ai{⟦B⟧ x₀ x₁}.
% \ma{We may want to say this earlier}

As an illustration, we show how to use \cref{thm:abstraction} to prove properties about
an abstract module and its clients.
Consider the following (restricted) interface for Booleans:
\begin{samepage}
\begin{agda}
  record Booleans where
    field
      Bool  : Set
      true  : Bool
      false : Bool
      ∧     : Bool → Bool → Bool
\end{agda}
\end{samepage}
The above declares an abstract type, \ai{Bool}, two
constants \ai{true} and \ai{false} of type \ai{Bool},
and a binary operation $\wedge$ over \ai{Bool}.
We instantiate this interface in the standard way:
\begin{agda}
module Impl where
  data Bool : Set where
    true  : Bool
    false : Bool

  ∧ : Bool → Bool → Bool
  ∧ false _ = false
  ∧ _ b     = b

booleans : Booleans
booleans = record {Impl}
\end{agda}
What may be surprising about this \ai{Booleans} interface is that
when we use it, we always end up writing monotonic functions.
More precisely, if we write a function
\ai{o : (imp : Booleans) → Bool imp → Bool imp}, it is
possible to prove, using parametricity, that if \ai{b₀, b₁ : Impl.Bool} are
such that \ai{b₀} implies \ai{b₁} then \ai{o booleans b₀}
implies \ai{o booleans b₁}.
Intuitively, this is because the \ai{Booleans} interface gives the
\ai{o} function no way to do negation.
This means that all \ai{o imp} can do to its \ai{b : Bool imp} argument is
to either discard it and return some other boolean or
take its conjunction with some other boolean.
These other booleans are either constants, or obtained by calling functions,
which are themselves parametric in \ai{imp}.
However, ``by induction'', these functions are also monotonic and so \ai{o} is
monotonic.

Making this ``by induction'' phrase precise demands an argument based on logical relations.
In Agda we can let \cref{thm:abstraction} do the ground work, making
the proof feel nearly automatic.
To see how this works formally, we need to understand two things.
Firstly, the \emph{standard} relational interpretation of the Booleans
interface ($\rel{\aim{Booleans}}$, obtained mechanically by using the
meta-level function from \cref{def:rel-types}), tells us how to relate two
implementations of $\aim{Booleans}$:\footnote{Accessing a record field is done
by treating the field name as a function with the analyzed record as an
additional first argument.}
\begin{agda}
record ⟦Booleans⟧ (m₀ m₁ : Booleans) : Set₁ where
  field
    Boolᵣ  : Bool m₀ → Bool m₁ → Set
    trueᵣ  : Boolᵣ (true m₀) (true m₁)
    falseᵣ : Boolᵣ (false m₀) (false m₁)
    ∧ᵣ     : ∀ a₀ a₁ → Boolᵣ a₀ a₁ →
             ∀ b₀ b₁ → Boolᵣ b₀ b₁ →
             Boolᵣ (∧ m₀ a₀ b₀) (∧ m₁ a₁ b₁)
\end{agda}
This relation contains a \emph{custom} logical relation (\ai{Boolᵣ}), such that
each method in the interface respects this relation (and \ai{trueᵣ, falseᵣ, ∧ᵣ}
are proofs witnessing this).

Secondly, because \ai{o} is parameterized by \ai{imp : Booleans}, we have that
\ai{⟦o⟧} is parameterized over \ai{⟦Booleans⟧}:
\begin{agda}
⟦o⟧ : (imp₀ imp₁ : Booleans)
   → (impᵣ : ⟦Booleans⟧ imp₀ imp₁)
   → (b₀ : Bool imp₀) (b₁ : Bool imp₁)
   → (bᵣ : Boolᵣ impᵣ b₀ b₁)
   → Boolᵣ impᵣ (o imp₀ b₀) (o imp₁ b₁)
\end{agda}
Now all it takes to prove our theorem is to realize that:
\begin{enumerate}
  \item We care about two \ai{Bool booleans}, so we have to take \ai{imp₀ = imp₁ = booleans}, and
  \item \ai{impᵣ : ⟦Booleans⟧ booleans booleans} is an argument to \ai{⟦o⟧} that
    we get to pick, and
  \item the final thing we want to prove is that if
    \ai{b₀ ⇒ b₁}, then \ai{o … b₀ ⇒ o … b₁},
    so we want that \ai{Boolᵣ b₀ b₁ = b₀ ⇒ b₁}.
\end{enumerate}
With insight (2) and (3) and a definition of $⇒$ as a relation on \ai{Bool}:
\begin{agda}
  _⇒_ : Bool -> Bool -> Set
  true ⇒ false = ⊥
  _    ⇒ _     = ⊤
\end{agda}
we can define the fields of \ai{booleansᵣ} as follows:
\begin{agda}
booleansᵣ : ⟦Booleans⟧ booleans booleans
booleansᵣ = record {
    Boolᵣ  = _⇒_
  ; trueᵣ  = tt
  ; falseᵣ = tt
  ; ∧ᵣ true  true aᵣ b₀ b₁ bᵣ = bᵣ
    ∧ᵣ false a₁   aᵣ b₀ b₁ bᵣ = tt
  }
\end{agda}
The proofs that \ai{true} and \ai{false} satisfy the relation are trivial.
For \ai{∧}, we proceed by a simple case analysis.
This gives us all the pieces we need to construct our monotonicity proof:
\begin{agda}
⟦o⟧ booleans booleans booleansᵣ :
    (b₀ : Bool booleans) → (b₁ : Bool booleans)
   → (bᵣ : b₀ ⇒ b₁) → o booleans b₀ ⇒ o booleans b₁
\end{agda}

Now, suppose we add negation to the interface of \ai{Booleans}, with the usual
implementation in \ai{booleans}:
\begin{agda}
record Booleans : Set₁ where
  field
    …
    neg : Bool -> Bool

module Impl where
  …
  neg : Bool → Bool
  neg true = false
  neg false = true

booleans : Booleans
booleans = record {Impl}
\end{agda}
If we try to prove the same monotonicity theorem, which now shouldn't hold, we run into
issues when we try to provide \ai{negᵣ} in the new \ai{booleansᵣ}.
Specifically, trying to fulfill the proof obligation by case analysis leaves us with an impossible goal:
\begin{agda}
negᵣ : (a₀ a₁ : Bool) → a₀ ⇒ a₁
    → neg booleans a₀ ⇒ neg booleans a₁
negᵣ true  true  tt = tt
negᵣ false true  tt = ? -- Goal is true ⇒ false
negᵣ false false tt = tt
\end{agda}
%

% \ma{Say something like: This is actually true for any relation that is preserved by ∧, that's the whole point!}
% \ch{Isn't the paragraph below doing this?}

As demonstrated by this example, parametricity for dependent types is not just
an adequate tool for reasoning about meta-theoretic properties of libraries,
it's also compositional.
To prove that a library guarantees some property, one
simply defines the necessary relations on one's types and proves that each
operation preserves these relations.
In \cref{sec:NIproofs} we use this technique to show
noninterference for two security libraries.
The proofs work like the simple proof above: we define the relations necessary
to prove noninterference and show that each operation in the library respects
the relations.

% LocalWords:  ParametricityJFP Impl

\section{\label{sec:NIproofs}Two Proofs of Noninterference}
In this section we use the parametricity technique outlined above to prove
noninterference for two dynamic IFC libraries.
The first proof (\cref{ni-multef}) is for the faceted values part of the Multef
Haskell library, which we have already introduced in \cref{DIFC-as-a-library}.
The second proof is for the significantly more complex LIO library (\cref{ni-lio}).

\subsection{Noninterference for Faceted Values}
\label{ni-multef}

\begin{figure*}
  \begin{multicols}{2}
\begin{agda}
Facᵣ  : (A₀ A₁ : Set)
     → (Aᵣ : ⟦Set⟧ A₀ A₁)
     → ⟦Set⟧ (Fac sec A₀) (Fac sec A₁)

facetᵣ : (A₀ A₁ : Set) → (Aᵣ : ⟦Set⟧ A₀ A₁)
      → (ℓ₀ ℓ₁ : Label) → (⟦ℓ⟧ : ⟦Label⟧ ℓ₀ ℓ₁)
      → (f₀₀ : Fac sec A₀) → (f₀₁ : Fac sec A₁)
      → (f₀ᵣ : Facᵣ A₀ A₁ Aᵣ f₀₀ f₀₁)
      → (f₁₀ : Fac sec A₀) → (f₁₁ : Fac sec A₁)
      → (f₁ᵣ : Facᵣ A₀ A₁ Aᵣ f₁₀ f₁₁)
      → Facᵣ A₀ A₁ Aᵣ (facet sec ℓ₀ f₀₀ f₁₀)
                       (facet sec ℓ₁ f₀₁ f₁₁)
\end{agda}
  \break  
\begin{agda}
returnᵣ : (A₀ A₁ : Set) → (Aᵣ : ⟦Set⟧ A₀ A₁)
       → (a₀ : A₀) → (a₁ : A₁) → (aᵣ : Aᵣ a₀ a₁)
       → Facᵣ A₀ A₁ Aᵣ (return sec a₀) (return sec a₁)

bindᵣ : (A₀ A₁ : Set) → (Aᵣ : ⟦Set⟧ A₀ A₁)
     → (B₀ B₁ : Set) → (Bᵣ : ⟦Set⟧ B₀ B₁)
     → (f₀ : Fac sec A₀) → (f₁ : Fac sec A₁)
     → (fᵣ : Facᵣ A₀ A₁ Aᵣ f₀ f₁)
     → (c₀ : A₀ → Fac sec B₀) → (c₁ : A₁ → Fac sec B₁)
     → (cᵣ : (a₀ : A₀) → (a₁ : A₁) → (aᵣ : Aᵣ a₀ a₁)
           → Facᵣ B₀ B₁ Bᵣ (c₀ a₀) (c₁ a₁))
     → Facᵣ B₀ B₁ Bᵣ (bind sec f₀ c₀) (bind sec f₁ c₁)
\end{agda}
  \end{multicols}
  \caption{\label{fig:RMultef} Fields in \ai{⟦MultefInterface⟧ sec sec}}
\end{figure*}

Recall the faceted values interface in \cref{fig:multef:interface} from \cref{DIFC-as-a-library}.
It exports a type \ai{Fac} for faceted values and operations \ai{facet},
\ai{return}, and \ai{bind} for manipulating them.
The goal in this section is to show that any client library of this abstract
interface obeys noninterference.

Intuitively, this means that any client function of \ai{MultefImplementation}
needs to take \ai{ℓ}-equivalent inputs to \ai{ℓ}-equivalent outputs.
In order to make the above statement explicit, we recall the \ai{sec} instantiation of
\ai{MultefInterface} and the definition of \ai{project} from \cref{DIFC-as-a-library}
and provide the following definition of \ai{ℓ}-equivalence:
If \ai{A} is a base-type and \ai{f₀, f₁ : Fac A} we say that
\ai{f₀} and \ai{f₁} are \ai{ℓ}-equivalent, written \ai{f₀ ∼⟨ ℓ ⟩ f₁}, when:
\begin{agda}
f₀ ∼⟨ ℓ ⟩ f₁ = project f₀ ℓ ≡ project f₁ ℓ
\end{agda}
Where \ai{≡} is propositional equality.

In order to prove noninterference, we need to prove that for a given
base-type $\aim{A}$ (say \ai{Bool}), function \ai{o : Fac A → Fac A} and
faceted values \ai{f₀, f₁ : Fac A} such that \ai{f₀ ∼⟨ ℓ ⟩ f₁} we have that
\ai{o f₀ ∼⟨ ℓ ⟩ o f₁}.
However, this proposition only holds if \ai{o} is a function in a \emph{client}
of the \aim{Multef} library (more accurately, its abstract interface).
In other words, noninterference only needs to hold for a function
\ai{o : (m : MultefInterface) → Fac m A → Fac m A}.

Recall from the \ai{Booleans} example in \cref{sec:parametricity} that we can
reason about \ai{o} by providing a suitable \ai{secᵣ : ⟦Multef⟧ sec sec}, a
\emph{proof} that \ai{sec} (from \cref{DIFC-as-a-library}):
\begin{agda}
sec : MultefInterface
sec = record {MultefImplementation}
\end{agda}
satisfies the relational interpretation of its type.
Formally, parametricity requires us to construct a \ai{fᵣ : ⟦A⟧ f f} for each function
\ai{f : A} in the \ai{MultefInterface} record type.
Concretely, this means that to construct \ai{secᵣ}, we need to construct Agda
terms inhabiting the four types in \cref{fig:RMultef}.

Picking the implementation of \ai{secᵣ : ⟦MultefInterface⟧ sec sec} that we
will use in our noninterference proof is straightforward given our formulation of
the \ai{f₀ ∼⟨ ℓ ⟩ f₁} relation above.
Given \ai{ℓ*} as the attacker-level that we are concerned about, we pick:
\begin{agda}
Facᵣ A₀ A₁ Aᵣ f₀ f₁ = Aᵣ (project f₀ ℓ*) (project f₁ ℓ*)
\end{agda}
Note that if \ai{A} is a \emph{base type}, like \ai{Bool}, then \ai{⟦A⟧ = _≡_}
and so \ai{Facᵣ A A ⟦A⟧} corresponds to \ai{∼⟨ ℓ* ⟩}.
In other words, \ai{Facᵣ Bool Bool ⟦Bool⟧ f₀ f₁} is equivalent to \ai{f₀ ∼⟨ ℓ* ⟩ f₁}.
The definitions of \ai{facetᵣ}, \ai{returnᵣ}, and \ai{bindᵣ} are easy to fill
out, and can be looked up in the Agda mechanization.\footnote{It may be helpful
to recall that \ai{Label} is a base-type, so the \ai{⟦Label⟧ ℓ₀ ℓ₁} argument to \ai{facetᵣ}
is equivalent to \ai{ℓ₀ ≡ ℓ₁}.}
\begin{theorem}[Noninterference for Faceted Execution]
  Given:
\begin{agda}
o : (m : MultefInterface) → Fac m Bool → Fac m Bool
\end{agda}
We know that for all \ai{f₀, f₁ : Fac sec Bool},\\
 if \ai{f₀ ∼⟨ ℓ* ⟩ f₁} then o \ai{sec f₀ ∼⟨ ℓ* ⟩ o sec f₁}
  % there exists a term:
  % %
  % assume :
 %  \begin{agda}
 %  f₀ ∼⟨ ℓ* ⟩ f₁
 %  \end{agda}
 %  %
 %  % We can show that:
 %  then
 %  %
 %  \begin{agda}
 %  o sec f₀ ∼⟨ ℓ* ⟩ o sec f₁
 % \end{agda}
\end{theorem}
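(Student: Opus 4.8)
The plan is to instantiate the parametricity theorem (\cref{thm:abstraction}) at the type of \ai{o} and then specialize it to the relational witness \ai{secᵣ} we have just constructed. Since \ai{o : (m : MultefInterface) → Fac m Bool → Fac m Bool}, \cref{thm:abstraction} yields a term \ai{⟦o⟧ : ⟦(m : MultefInterface) → Fac m Bool → Fac m Bool⟧ o o}. Unfolding the relational interpretation of the dependent function type from \cref{def:rel-types}, this means \ai{⟦o⟧} takes two implementations \ai{m₀ m₁ : MultefInterface}, a proof \ai{mᵣ : ⟦MultefInterface⟧ m₀ m₁} that they are related, two faceted values \ai{f₀ : Fac m₀ Bool} and \ai{f₁ : Fac m₁ Bool}, a proof \ai{fᵣ} that they are related by \ai{Facᵣ} (the first field of \ai{mᵣ}), and produces a proof that \ai{o m₀ f₀} and \ai{o m₁ f₁} are related by the same \ai{Facᵣ}.

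First I would set \ai{m₀ = m₁ = sec} and supply \ai{mᵣ = secᵣ}, the witness \ai{⟦MultefInterface⟧ sec sec} assembled in \cref{fig:RMultef} from \ai{Facᵣ}, \ai{facetᵣ}, \ai{returnᵣ}, and \ai{bindᵣ}, using the choice
\begin{agda}
Facᵣ A₀ A₁ Aᵣ f₀ f₁ = Aᵣ (project f₀ ℓ*) (project f₁ ℓ*)
\end{agda}
Next I would observe, as noted in the text, that at the base type \ai{Bool} the relational interpretation \ai{⟦Bool⟧} is propositional equality \ai{_≡_}, so the specialized relation \ai{Facᵣ Bool Bool ⟦Bool⟧ f₀ f₁} unfolds to \ai{project f₀ ℓ* ≡ project f₁ ℓ*}, which is exactly the definition of \ai{f₀ ∼⟨ ℓ* ⟩ f₁}. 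Thus both the hypothesis slot \ai{fᵣ} and the conclusion slot of \ai{⟦o⟧} coincide definitionally with the $\sim\langle \ai{ℓ*} \rangle$ relation we care about.

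Putting these together, given \ai{f₀, f₁ : Fac sec Bool} and a proof \ai{p : f₀ ∼⟨ ℓ* ⟩ f₁}, the desired conclusion \ai{o sec f₀ ∼⟨ ℓ* ⟩ o sec f₁} is obtained simply as the application
\begin{agda}
⟦o⟧ sec sec secᵣ f₀ f₁ p
\end{agda}
after rewriting along the two definitional equalities identifying \ai{Facᵣ Bool Bool ⟦Bool⟧} with \ai{∼⟨ ℓ* ⟩}. The main obstacle, and the only place requiring real work, is discharging the proof obligations inside \ai{secᵣ}: verifying that \ai{returnᵣ}, \ai{facetᵣ}, and especially \ai{bindᵣ} hold for our chosen \ai{Facᵣ}. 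The \ai{returnᵣ} case is immediate since \ai{project ℓ* (return a) = a}; the \ai{facetᵣ} case follows by the case split on \ai{ℓ ⊑ ℓ*} in the definition of \ai{project} together with the fact that \ai{⟦Label⟧ ℓ₀ ℓ₁} forces \ai{ℓ₀ ≡ ℓ₁}; and \ai{bindᵣ} requires showing that projecting after a \ai{bind} commutes suitably with projection, so that relatedness of the continuation's outputs propagates through the traversal. Everything after establishing \ai{secᵣ} is a one-line application of the abstraction theorem, which is precisely the payoff advertised in the title.
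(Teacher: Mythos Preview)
Your proposal is correct and follows essentially the same approach as the paper: obtain \ai{⟦o⟧} from \cref{thm:abstraction}, instantiate at \ai{m₀ = m₁ = sec} with the relational witness \ai{secᵣ}, observe that \ai{Facᵣ Bool Bool ⟦Bool⟧} coincides with \ai{∼⟨ ℓ* ⟩}, and conclude by applying \ai{⟦o⟧} to the assumption. Your additional remarks on how \ai{returnᵣ}, \ai{facetᵣ}, and \ai{bindᵣ} are discharged simply spell out what the paper defers to its Agda mechanization.
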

\begin{proof}
  From \cref{thm:abstraction}, we know that the following holds:
  \begin{agda}
  ⟦o⟧ : (m₀ m₁ : MultefInterface)
     → (mᵣ : ⟦MultefInterface⟧ m₀ m₁)
     → (f₀ : Fac m₀ Bool) → (f₁ : Fac m₁ Bool)
     → (fᵣ : Facᵣ mᵣ Bool Bool ⟦Bool⟧ f₀ f₁)
     → Facᵣ mᵣ Bool Bool ⟦Bool⟧ (o m₀ f₀) (o m₁ f₁)
  \end{agda}
  We have chosen \ai{secᵣ : ⟦MultefInterface⟧ sec sec} so that:
  \begin{agda}
    Facᵣ secᵣ Bool Bool ⟦Bool⟧ f₀ f₁ = f₀ ∼⟨ ℓ* ⟩ f₁
  \end{agda}
  This means that:
  \begin{agda}
  ⟦o⟧ sec sec f₀ f₁ assume : o sec f₀ ∼⟨ ℓ* ⟩ o sec f₁
\end{agda}
  And is thus a valid Agda proof term for our theorem.
\end{proof}

\subsection{Noninterference for Core LIO}
\label{ni-lio}

Next we turn our attention to LIO.
We first explain our Agda port of LIO and then how
we use parametricity to prove noninterference for it.
Our port covers both the original LIO library \cite{LIO} and a more recent
extension to allow recovering from user-defined and IFC exceptions
\cite{StefanMMR17}, but leaves out state
%%(in \cref{sec:effects} we give a simple way state could be added for free on top of this core library)
and clearance
(the latter a concern not directly related to IFC that could also be easily added).
The interface of the core library can be seen in \cref{fig:lio}, but most of the
implementation has been elided for space.
Instead on focusing on the code, we give the intuition behind this library.

\begin{figure}
  \begin{agda}
record LIOInterface : Set₁ where
  field
    -- Labeled Values
    Labeled   : Set → Set
    label     : {A : Set} → Label → A → Labeled A
    labelOf   : {A : Set} → Labeled A → Label
    -- LIO Computations
    LIO       : Set → Set
    return    : {A : Set} → A → LIO A
    bind      : {A B : Set} → LIO A
             → (A → LIO B) → LIO B
    unlabel   : {A : Set} → Labeled A → LIO A
    toLabeled : {A : Set} → Label
             → LIO A → LIO (Labeled A)
    -- Exceptions
    throw     : {A : Set} → UserError → LIO A
    catch     : {A : Set} → LIO A
             → (E → LIO A) → LIO A

sec : LIOInterface
sec = { Labeled A = (A ⊎ E) × Label
      ; LIO A = (ℓc : Label) → Σ ((A ⊎ E) × Label)
                                  (λ r → ℓc ⊑ π₂ r)
      ; … }
  \end{agda}
  \caption{\label{fig:lio} Our Agda port of the core LIO library}
\end{figure}

% \ch{How about we start with a simpler version without exceptions
% (corresponding to the original version of LIO \cite{LIO}) before we introduce exceptions
% (a more recent extension of LIO \cite{StefanMMR17})?
% An alternative is to stop reading the code and just explain the main ideas using just types,
% leaving the explanation of the exceptions for later.
% And either way, should mention upfront that our formalization
%   covers both the original LIO and the exception extension.}

The interface first defines \ai{Labeled A}, the abstract type of
labeled values of type \ai{A}.
If \ai{v} is a value of type \ai{A} and \ai{ℓ} is an IFC label, then we can use
the \ai{label ℓ v} operation to classify value \ai{v} at level \ai{ℓ}, which
results in a labeled value of type \ai{Labeled A}.
Once we have a \ai{Labeled} value we can use \ai{labelOf} to obtain its label,
which witnesses the fact that, in LIO, the label on data is public information.
Unlike the label, the value of a \ai{lv : Labeled A} is not public,
but protected precisely by \ai{labelOf lv}.
This means that it would not be secure to simply extract the value of \ai{lv},
so the \ai{LIO} interface provides no such operation.
Instead, to work with labeled values in a way that ensures IFC
we need to turn to \emph{labeled \ai{LIO} computations}.

An \ai{LIO} computation keeps track of a {\em current label},
which is the upper bound of all labeled values already inspected by the
computation.
\ai{LIO} threads through the current label, and in our case we also produce an
explicit proof that the current label can only increase in the IFC lattice, or
stay the same.
This (monotonic) state passing makes \ai{LIO} a monad, with \ai{return} and
\ai{bind} operations having analogous type signatures to those for faceted execution.
In addition, the \ai{LIO} monad provides an \ai{unlabel} operation that soundly returns the
value inside a labeled value \ai{lv} by increasing the current label by \ai{labelOf lv}:
\begin{agda}
  unlabel : {A : Set} → Labeled A → LIO A
\end{agda}
This allows \ai{LIO} computations to process labeled data, while using the
current label to track both explicit and implicit information flows (\IE flows
through the control flow of the program~\cite{sabelfeld2003language}).
Once we are done computing based on labeled values we can label the result and
restore the current label to what it was at the beginning of the current computation.
To prevent leaking information via the label of the final result, this label has
to be chosen in advance before inspecting any labeled data.
This functionality is implemented by the following operation:
\begin{agda}
  toLabeled : {A : Set} → Label
           → LIO A → LIO (Labeled A)
\end{agda}
The expression \ai{toLabeled ℓ lio} runs the \ai{lio} computation and if at the
end the current label is below \ai{ℓ} the result is labeled \ai{ℓ} and the
current label restored.
On the other hand, if at the end the current label is not below \ai{ℓ} we have
to signal an IFC error.
The original LIO~\cite{LIO} treated such errors as fatal and stopped execution,
however a more recent extension of LIO \cite{StefanMMR17} makes IFC errors
recoverable.
In the case of a wrongly annotated \ai{toLabeled} though, throwing an exception
would not be sound: we can restore the current label at the end only if that is
a control-flow join point.
To preserve this property, LIO returns instead a {\em delayed
exception}~\cite{hritcu2013all}, which is another kind of labeled value, labeled
with the originally chosen level \ai{ℓ}.
When unlabeling it the delayed exception is re-thrown, which is sound because, as
explained above, unlabeling a value raises the current label.

In addition to re-throwing delayed exceptions on \ai{unlabel}, LIO provides
standard primitives to \ai{throw} user exceptions and to \ai{catch} arbitrary
ones.
In order to achieve soundness, though, LIO also has to delay any such exceptions
at the end of \ai{toLabeled}.
To make debugging easier, all exceptions carry information such as the current
label at the time the exception was originally thrown together with a stack
trace. % (the later omitted from \cref{fig:lio}). -- that code anyway omitted
In addition, IFC exceptions record additional information about the involved
labels, when this information can be securely revealed (\EG when the label check
fails for \ai{toLabeled ℓ} it is secure to reveal the label \ai{ℓ}, but not the
current label that is not below \ai{ℓ}).

With this intuition in place, we can look at the actual definitions of
the \ai{Labeled} and \ai{LIO} types.
% \footnote{We omit the definition of the operations
% (like \ai{return}, \ai{getLabel}, \ETC) and refer the interested reader to the
% supplementary material} -- CH: dupe
%
The definition of \ai{Labeled} is straightforward: a labeled
value is a pair of a \ai{Label} and either a result of type \ai{A}, or
a delayed exception of type \ai{E}, \ai{Labeled A = (A ⊎ E) × Label},
where \ai{⊎} denotes the tagged sum, or ``disjunctive union'':
\begin{agda}
  data _⊎_ : Set → Set → Set where
    inj₁ : {A B : Set} → A → A ⊎ B
    inj₂ : {A B : Set} → B → A ⊎ B
\end{agda}

The definition of \ai{LIO} meanwhile, is more involved.
A term of type \ai{LIO A} is a function that takes a current
label \ai{ℓc : Label} and produces a result that we
call a \emph{configuration}.
A configuration is an output label together with either a result of type \ai{A}
or a delayed exception of type \ai{E} (that's either a user exception or an IFC
exception whose details are omitted here).
However, as noted above we make the LIO computation also produce a proof that
the output label is at least as restrictive as the input label \ai{ℓc}.
This addition is opaque to the programmer, who programs against the abstract
interface of LIO, but is useful for simplifying our proofs, since it prevents
the logical relation below from being cluttered with this monotonicity proof.
Consequently, the definition of LIO is as follows:
\begin{agda}
-- Configuration
Cfg A = (A ⊎ E) × Label
-- LIO computation
LIO A = (ℓc : Label) -- Input label
      → Σ (Cfg A) -- Result
       (λ r → ℓc ⊑ π₂ r) -- Useful proof
\end{agda}

This definition uses a generalized sum type, or Σ-type,
to connect the proof of monotonic labels to the computation.
The Σ type-former is given by the following record type:
\begin{agda}
record Σ (A : Set) (B : A → Set) : Set where
  field  π₁ : A
         π₂ : B π₁
\end{agda}
and we additionally have the notation \ai{(a , b)} for \ai{record {π₁=a, π₂=b}}.
Finally, to avoid confusion we note that the \ai{_,_}-syntax is shared between
Σ-types and simple product types (\ai{A × B}), indeed the latter is an
instantiation of the former: \ai{A × B = Σ A (λ _ → B)}.

With this background in place, we turn to stating noninterference for \ai{LIO}.
As in the previous subsection, we could do this for any client function that is
parametric in the LIO interface, takes a labeled boolean input, and performs a
LIO computation returning a boolean as result:
\begin{agda}
o' : (m:LIOInterface) → Labeled m Bool → LIO m Bool
\end{agda}
While the soundness of IFC libraries such as LIO was sometimes only proved with
respect to such observers~\cite{LIO}, such statements are too specialized to
provide a sufficiently useful reasoning principle for most clients of the library.
Instead, we generalize ``\ai{Labeled m Bool}'' and ``\ai{LIO m Bool}'' above
to a first-order subset of Agda types, which includes base types, LIO-specific types and sum and products.
We define this subset syntactically, as a new inductive type in Agda:
\begin{agda}
data Univ : Set where
   bool    : Univ
   nat     : Univ
   error   : Univ
   labeled : Univ → Univ
   lio     : Univ → Univ
   _plus_  : Univ → Univ → Univ
   _times_ : Univ → Univ → Univ
\end{agda}
We interpret such syntax as the corresponding Agda type:
\begin{agda}
El : LIOInterface → Univ → Set
El m bool          = Bool
El m nat           = ℕ
El m error         = E
El m (labeled u)   = Labeled m (El m u)
El m (lio u)       = LIO m (El m u)
El m (u₀ plus u₁)  = El m u₀ ⊎ El m u₁
El m (u₀ times u₁) = El m u₀ × El m u₁
\end{agda}
Together, the above two constructions form a so-called ``universe'' of types --- hence the name \ai{Univ} for the inductive set.
This allows us to define a more general type of observers as functions taking
first-order values as inputs (``\ai{El m uᵢ}'') and returning first-order values (``\ai{El m uₒ}''):
\begin{agda}
o : (uᵢ uₒ : Univ)
    → (m:LIOInterface) → El m uᵢ → El m uₒ
\end{agda}

To state noninterference for \ai{o sec} we define \ai{ℓ*}-equivalence
by induction on the structure of our universe of first-order types.
For \ai{Bool}, \ai{ℕ}, and \ai{E} we define \ai{ℓ*}-equivalence simply as equality.
For pair types \ai{a × b} and sum types \ai{a ⊎ b} we define \ai{ℓ*}-equivalence pointwise.
In other words, two pairs \ai{(a₀ , b₀)} and \ai{(a₁ , b₁)} are \ai{ℓ*}-equivalent if \ai{a₀ ∼⟨ ℓ* ⟩ a₁}
and \ai{b₀ ∼⟨ ℓ* ⟩ b₁} and two sums \ai{injᵢ x} and \ai{injⱼ y} are \ai{ℓ*}-equivalent if
\ai{i ≡ j} and \ai{x ∼⟨ ℓ* ⟩ y}.
We say that two labeled values \ai{lv₀, lv₁ : Labeled sec a} are indistinguishable
at an observer label \ai{ℓ*}, written \ai{lv₀ ∼⟨ ℓ* ⟩ lv₁},\footnote{Formally
  speaking, our \ai{ℓ*}-equivalence definition in Agda has an extra universe
  argument, which we elide for readability here, but which disambiguates between
  labeled values and regular pairs in this definition.}
 if and only if:
\begin{enumerate}
  \item \ai{labelOf lv₀ ≡ labelOf lv₁} and
  \item if \ai{labelOf lv₀ ⊑ ℓ*} then \ai{π₁ lv₀ ∼⟨ ℓ* ⟩ π₁ lv₁}
\end{enumerate}

Point 1 says that the labels of \ai{lv₀} and \ai{lv₁} 
may never diverge from each other, since they are public information.
Point 2 says that if \ai{lv₀} (and therefore also \ai{lv₁}) is a public level,
then the payloads of \ai{lv₀} and \ai{lv₁} have to be \ai{ℓ*}-equivalent.
The recursive call into \ai{ℓ*}-equivalence happens at the sum type \ai{a ⊎ E},
which ensures that the payloads are either \ai{ℓ*}-equivalent values or equal errors.

Next we turn our attention to the relation for LIO computations.
Recall that \ai{c₀} and \ai{c₁} are internally state-passing computations where
the state is the current label: \ai{Label → Σ ((a + E) × Label) …}.
Because the current label is not necessarily public, but instead it protects
itself, the standard way to define \ai{ℓ*}-equivalence for the current label is
the following:
% (overloading the notation) -- labels are not in our universe though!
%     -- only if we were to add them this would be an overloading
\begin{agda}
ℓc₀ ∼⟨ ℓ* ⟩ ℓc₁ = (ℓc₀ ⊑ ℓ* ∨ ℓc₁ ⊑ ℓ*) → ℓc₀ ≡ ℓc₁
\end{agda}
Two current labels are \ai{ℓ*}-equivalent if they are equal whenever one of them
is observable at level \ai{ℓ*}.
We can leverage this definition to define \ai{ℓ*}-equivalence for final
configurations of type \ai{(a + E) × Label}:\footnote{For readability
we are overloading notation here:
even though % sum and -- for sums we're not overloading anything
products are within our universe,
configurations get their \textit{ad-hoc} \ai{ℓ*}-equivalence definition.}
\begin{agda}
(r₀ , ℓc₀) ∼⟨ ℓ* ⟩ (r₁ , ℓc₁) =
    (ℓc₀ ∼⟨ ℓ* ⟩ ℓc₁) ∧
    (ℓc₀ ⊑ ℓ* ∧ ℓc₁ ⊑ ℓ* → r₀ ∼⟨ ℓ* ⟩ r₁)
\end{agda}
That is, for two configurations to be \ai{ℓ*}-equivalent, we require that the current
labels are \ai{ℓ*}-equivalent and if they are public then the results of the
computation should also be \ai{ℓ*}-equivalent.
With the above extensions of \ai{ℓ*}-equivalence in place, it is easy to define \ai{ℓ*}-equivalence for \ai{LIO} computations:
\begin{agda}
c₀ ∼⟨ ℓ* ⟩ c₁ = (ℓc₀ ℓc₁:Label) → ℓc₀ ∼⟨ ℓ* ⟩ ℓc₁ →
                 π₁ (c₀ ℓc₀) ∼⟨ ℓ* ⟩ π₁ (c₁ ℓc₁)
\end{agda}
In words, the above states that for any \ai{ℓ*}-equivalent initial current labels we obtain
\ai{ℓ*}-equivalent final configurations (the π₁ projections are needed to ignore
the proof part of the \ai{LIO} type).
This completes the definition of \ai{ℓ*}-equivalence for our universe,
and we can now state our noninterference theorem:

\begin{theorem}[Noninterference]\label{thm:lio-ni}
  Given:
\begin{agda}
o : (uᵢ uₒ : Univ)
    → (m:LIOInterface) → El m uᵢ → El m uₒ
\end{agda}
  For all \ai{v₀, v₁ : El sec uᵢ}, if
  % there exists a term: -- dropped assume name
  %
  \begin{agda}
  v₀ ∼⟨ ℓ* ⟩ v₁
  \end{agda}
  %
  % We show that:
  then
  \begin{agda}
   o sec v₀ ∼⟨ ℓ* ⟩ o sec v₁
 \end{agda}
\end{theorem}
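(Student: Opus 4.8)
The plan is to follow the same strategy used for faceted values: apply the parametricity theorem (\cref{thm:abstraction}) to the client \texttt{o}, and then instantiate the resulting relational statement at a witness $\texttt{sec}_r : \rel{\texttt{LIOInterface}}\ \texttt{sec}\ \texttt{sec}$ engineered so that the relation it induces on the input and output types is exactly the $\ell^*$-equivalence defined on the universe. Concretely, \cref{thm:abstraction} yields
\[
  \rel{o} : \rel{(u_i\,u_o : \texttt{Univ}) \to (m:\texttt{LIOInterface}) \to \texttt{El}\ m\ u_i \to \texttt{El}\ m\ u_o}\ o\ o.
\]
Since \texttt{Univ} is an inductive base type, its relational interpretation is propositional equality, so I can instantiate both copies of each universe code with the same $u_i$ (resp.\ $u_o$) via \texttt{refl}, and both interface arguments with \texttt{sec} carrying the witness $\texttt{sec}_r$. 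The two remaining tasks are (i) to construct $\texttt{sec}_r$ and (ii) to show that the relation induced by $\texttt{sec}_r$ on \texttt{El sec u} is precisely $\ell^*$-equivalence.

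For (i), the essential choice is to pin down the two relations the interface leaves abstract, analogously to the faceted case. I define the field $\Labeled_r$ on a pair of labeled values to be the two-point definition of $\ell^*$-equivalence for labeled values (equal labels, and related payloads in $A \uplus E$ whenever $\labelOf \canFlowTo \ell^*$), and the field $\LIO_r$ to say that $\ell^*$-equivalent initial current labels produce $\ell^*$-equivalent final configurations. Having fixed these, I must discharge one proof obligation per interface field --- $\LIOlabel_r$, $\labelOf_r$, $\return_r$, $\bind_r$, $\unlabel_r$, $\toLabeled_r$, $\texttt{throw}_r$, $\texttt{catch}_r$ --- each stating that the corresponding operation of \texttt{sec} carries related arguments to related results.

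For (ii), I prove by induction on the code $u : \texttt{Univ}$ that the interpretation $\rel{\texttt{El}}\ \texttt{sec}_r\ u$ coincides with the ad-hoc $\ell^*$-equivalence: at \texttt{bool}, \texttt{nat}, and \texttt{error} both are propositional equality; at products and sums both are componentwise, matching by the induction hypothesis; and at \texttt{labeled} and \texttt{lio} both reduce to $\Labeled_r$ and $\LIO_r$, which were defined to be exactly those clauses. Composing this agreement with the instantiated $\rel{o}$ turns its conclusion into the desired $o\ \texttt{sec}\ v_0 \sim_{\ell^*} o\ \texttt{sec}\ v_1$, completing the proof.

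The main obstacle is the field $\toLabeled_r$, with $\unlabel_r$ a close second. Running $\toLabeled\ \ell\ c$ branches on whether the current label reached at the end of $c$ flows to the statically chosen label $\ell$: on success it labels the genuine result with $\ell$ and restores the current label, and on failure it emits a delayed exception labeled $\ell$. Because the output label is always $\ell$, the label-equality point of $\Labeled_r$ is automatic, so the work is confined to the payload point, which only bites when $\ell \canFlowTo \ell^*$. There I must show that the two runs take the \emph{same} branch and, on the success branch, return $\ell^*$-equivalent payloads: if $\ell \canFlowTo \ell^*$ and one run's final current label flows to $\ell$, then that label also flows to $\ell^*$, which by the current-label clause forces the two final current labels to be equal, so the branches agree and, on success, the two results are $\ell^*$-equivalent because both final current labels lie below $\ell$ and hence below $\ell^*$. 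On the failure branch I additionally rely on the fact that the delayed IFC exception records only the securely revealable $\ell$ and not the (secret) final current label, so the two exceptions are literally equal. The analogous subtlety in $\unlabel_r$ is that unlabeling joins the current label with the label of the value and re-throws any delayed exception; here I check that the raised label preserves $\ell^*$-equivalence of the configurations and that a re-thrown exception is matched by an equal exception --- both consequences of the two points of $\Labeled_r$. The remaining fields are either immediate ($\LIOlabel_r$, $\labelOf_r$, $\return_r$, $\texttt{throw}_r$) or a routine threading of the current label through the monadic structure ($\bind_r$, $\texttt{catch}_r$).
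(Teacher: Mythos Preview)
Your proposal is correct and follows the same strategy as the paper: apply \cref{thm:abstraction} to \ai{o}, instantiate at a witness \ai{secᵣ : ⟦LIOInterface⟧ sec sec} whose \ai{Labeledᵣ} and \ai{LIOᵣ} fields are chosen so that the induced relation on each \ai{El sec u} is exactly \ai{ℓ*}-equivalence, and then discharge one obligation per interface operation. Your step (ii), the induction on \ai{Univ} matching \ai{⟦El⟧ secᵣ} with the ad-hoc \ai{∼⟨ ℓ* ⟩}, is also what the paper does.

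One point where your assessment diverges from the paper's is the identification of where the real work lies. The paper classifies \ai{unlabel} among the \emph{straightforward} cases and singles out the higher-order operations \ai{bind}, \ai{toLabeled}, and \ai{catch} as the ones with ``slightly more interesting proofs that rely on the monotonicity of the current label''. You instead call \ai{bindᵣ} and \ai{catchᵣ} ``routine threading'' and do not mention monotonicity. This matters: in \ai{bindᵣ} (and analogously \ai{catchᵣ}), when the first computation drives both current labels strictly above \ai{ℓ*}, the intermediate results \ai{r₀, r₁} need not be related, so you cannot invoke the relatedness hypothesis on the continuation; the only way to conclude that the final configurations are \ai{ℓ*}-equivalent is to use the packaged proof \ai{ℓc ⊑ π₂ r} in the \ai{LIO} type to argue that the final current labels remain above \ai{ℓ*}, making the payload clause vacuous. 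Your detailed analysis of \ai{toLabeledᵣ} is fine, but be prepared for \ai{bindᵣ} and \ai{catchᵣ} to need this monotonicity argument rather than being purely mechanical.
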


The general strategy of the proof is the same as for Multef.
While out Agda proofs covers all the cases for \ai{uᵢ} and \ai{uₒ},
for simplicity, here we show only the special case
where ``\ai{El m uᵢ = Labeled m Bool}'' and ``\ai{El m uₒ = LIO m Bool}''.
In particular we use \cref{thm:abstraction} to obtain:
  \begin{agda}
⟦o⟧ : (m₀ m₁ : LIOInterface)
   → (mᵣ : ⟦LIOInterface⟧ m₀ m₁)
   → (l₀ : Labeled m₀ Bool) → (l₁ : Labeled m₁ Bool)
   → (lᵣ : Labeledᵣ mᵣ Bool Bool ⟦Bool⟧ l₀ l₁)
   → LIOᵣ mᵣ Bool Bool ⟦Bool⟧ (o m₀ l₀) (o m₁ l₁)
  \end{agda}
To use this result we first need to pick two relations
\ai{Labeledᵣ : ⟦Set → Set⟧ Labeled Labeled} and
\ai{LIOᵣ : ⟦Set → Set⟧ LIO LIO} and prove that
relatedness at these relations is respected by the LIO operations.
Moreover, to be useful for proving noninterference these relations have to
specialize
to the \ai{ℓ*}-equivalence relations used in the noninterference statement above,  for all the types in our \ai{Univ} universe (\EG for \ai{Labeled Bool}).

For example, to define \ai{Labeledᵣ}, we generalize the definition of
\ai{ℓ*}-equivalence to ensure that when \ai{Aᵣ} coincides with
\ai{ℓ*}-equivalence, then \ai{Labeledᵣ A₀ A₁ Aᵣ} does too:
\begin{agda}
Labeledᵣ A₀ A₁ Aᵣ lv₀ lv₁ =
 (labelOf m₀ lv₀ ≡ labelOf m₁ lv₁) ∧
 (labelOf m₀ lv₀ ⊑ ℓ* → (Aᵣ ⟦⊎⟧ Eᵣ) (π₁ lv₀) (π₁ lv₁))
\end{agda}
(Where the relation \ai{Aᵣ ⟦⊎⟧ Bᵣ} relates two values if they are
either \ai{inj₁} and related by \ai{Aᵣ} or \ai{inj₂} and related by \ai{Bᵣ}.)
In the second conjunct, we require that if the level of the two labeled values
is public then either they are both errors related at \ai{Eᵣ}, or they both
carry values of type \ai{A} that are related at \ai{Aᵣ}.
The relation \ai{Eᵣ : E → E → Set} relates two delayed exceptions if and only if
they are the same.\ifsooner\ma{We can actually weaken this restriction. User exceptions
could have labeled values in them if we want. This may be a nice extension of LIO}\ch{
Right, it would be good to extend what users can throw beyond strings}\fi{}
We do a similar generalization types in the \ai{Univ} universe and \ai{ℓ*}-equivalence
to arbitrary types for defining \ai{LIOᵣ}.

The main part of our Agda proof of \cref{thm:lio-ni} is showing that the LIO
operations respect the \ai{Labeledᵣ} and \ai{LIOᵣ} relations.
% , like \ai{labelOf}, \ai{toLabeled}, \ETC
%
Some of the LIO operations have straightforward proofs (\ai{label},
\ai{return}, \ai{labelOf}, \ai{unlabel}, and \ai{throw}),
while the higher-order operations (\ai{bind}, \ai{toLabeled}, and \ai{catch}),
have slightly more interesting proofs that rely on the monotonicity of the
current label.
Fortunately, all these proofs are pleasantly short adding up to around 360 lines
of Agda for the complete noninterference proof in our supplementary material.
%
% JPB: already mentioned in the related work, let's not mention it here just
% yet; rather focus on the awesome achievement :)
% CH: The end of this section already has a comparison with LIO, and this number
%     is crucial for putting that in context.

\ifsooner
\todo{We should probably clean up the Agda proofs, they are not very nice at the
moment}\ch{might still want to factor toLabeled to not branch unnecessarily}
\fi

% \begin{proof}
%   %
%   Parametricity gives us:
%   %
%   \begin{agda}
% ⟦o⟧ : (m₀ m₁ : LIOInterface)
%    → (mᵣ : ⟦LIOInterface⟧ m₀ m₁)
%    → (l₀ : Labeled m₀ Bool) → (l₁ : Labeled m₁ Bool)
%    → (lᵣ : Labeledᵣ mᵣ Bool Bool ⟦Bool⟧ l₀ l₁)
%    → LIOᵣ mᵣ Bool Bool ⟦Bool⟧ (o m₀ l₀) (o m₁ l₁)
%   \end{agda}
%   %
%   Which means that we have:
%   %
%   \begin{agda}
% oᵣ sec sec secᵣ l₀ l₁ assume ℓ⊥ ℓ⊥ (inj₁ refl) :
%   o sec l₀ ℓ⊥ ∼⟨ ℓ* ⟩ o sec l₁ ℓ⊥
%   \end{agda}
% \end{proof}
% \todo{\ma{Clean up this proof and make Agda correspond to this proof}}

The short and fully mechanized Agda proof that we describe above can be contrasted
with the previous % (state of the art)
  % -- CH: dangerous claim, our proof in \cite{hritcu2013all} is much better!
partially mechanized proof for LIO~\cite{StefanMMR17}.
This previous proof shows noninterference for an abstract calculus
\emph{without} exception handling and state, covering a strict subset of the
features of the library implementation that we have verified here.
Their proof technique, to show a simulation between evaluation of an LIO term
with secrets and the same term with the secrets erased, is standard but
cumbersome.
Consequently, their proof amounts to over 3000 lines of Coq, even if it is not
fully mechanized and it only covers a small calculus, not a library
implementation in terms of type abstraction.
Their proof could probably be finished and made shorter by using more tactic
automation or a better proof strategy~\cite{hritcu2013all}, yet it seems hard
to match the conceptual simplicity and compactness of our parametricity-based proof.

% JPB: fits in related work and is already there.
% In sum, our technique not only shrinks the proofs by an order
% of magnitude, we also cover a more complete picture of the library
% \emph{and} our theorem is about the actual library, rather than a
% model.

\section{Transparency}
\label{sec:transparency}
One of the primary justifications for faceted semantics is the
so-called \emph{transparency} theorem \cite{MF}.
In short, transparency states that: for any program $p$ that is noninterfering
under a non-faceted, ``standard'' semantics, the behavior of $p$ is
preserved when $p$ is run with faceted semantics.
Intuitively, this means that there are no false alarms with faceted execution:
if the program is noninterfering to begin with, facets don't change anything.
This is unlike systems like LIO, where false alarms are a problem that the
programmer has to work around by adhering to proper programming style.

In our setting, this transparency property can be reformulated in terms of one
of the key lemmas used to prove both noninterference and transparency for
traditional faceted calculi: \emph{faceted evaluation simulates standard
evaluation}~\cite{MF, OptimisingFSME}.

To make sense of what this means in our context we need to explain the distinction
between faceted and standard evaluation.
Luckily, it is straightforward for us to define what we mean by different
semantics for the same program: we simply give different implementations of the
\ai{MultefInterface}!
In particular, the faceted semantics was already defined as the
\ai{MultefImplementation} module in \cref{fig:multef} from \cref{DIFC-as-a-library}.

\begin{figure}
\begin{agda}
Maybe : Set → Set
Maybe A = A ⊎ ⊤

just : {A : Set} → A → Maybe A
just a = inj₁ a

nothing : {A : Set} → Maybe A
nothing = inj₂ tt

⟦Maybe⟧ : (A₀ A₁ : Set) → (Aᵣ : A₀ → A₁ → Set)
       → Maybe A₀ → Maybe A₁ → Set
⟦Maybe⟧ _ _ Aᵣ = Aᵣ ⟦⊎⟧ ⟦⊤⟧
\end{agda}
\squeezeup
  \caption{\label{fig:maybe} The \ai{Maybe} type former and its relational interpretation.}
\end{figure}

To define the standard semantics, we first introduce the
\ai{Maybe} (also known as ``option'') type former, as a special case
of the \ai{⊎} type, in \cref{fig:maybe}.
With this in place, we define the standard semantics of Multef in
\cref{fig:multef-std}.
Most of the definitions are unsurprising: \ai{Fac A} is \ai{Maybe A}, while
\ai{return} and \ai{bind} are standard for the \ai{Maybe} monad.
The only potentially surprising definition is \ai{facet ℓ f₀ f₁ = nothing}.
To understand it, note that the standard phrasing of the transparency
property in the literature makes reference to \emph{facet-free} programs
\cite{FSME, OptimisingFSME, MF}.
In our setting, we cannot easily talk about such ``facet-free'' programs, because all
programs we study are clients of the \ai{MultefInterface},
% \ch{Is ``cannot'' too
%   strong here? Can't one simply define a smaller sub-interface that doesn't include facet?}
%
so instead we make do by talking about programs that do not return \ai{nothing}
under evaluation in the standard semantics.

\begin{figure}
  \begin{agda}
module Std where
  Fac : Set → Set
  Fac a = Maybe a

  facet : {A : Set} → L → Fac A
       → Fac A → Fac A
  facet ℓ f₀ f₁ = nothing

  return : {A : Set} → A → Fac A
  return a = just a

  bind : {A B : Set} → Fac A
      → (A → Fac B) → Fac B
  bind f c = case f of \
    { inj₁ a  → c a
    ; inj₂ tt → inj₂ tt }

std : MultefInterface
std = record {Std}
  \end{agda}
\squeezeup
  \caption{\label{fig:multef-std} Standard Semantics of Multef}
\end{figure}

\begin{theorem}[Transparency]
  \label{thm:transparency}
  Fix a label \ai{ℓ* : Label}.
  Given any \ai{b : Bool}, define the faceted value \ai{fᵇ} as having value
  \ai{b} for observers that can see data labeled \ai{ℓ*}, and value \ai{false}
  otherwise as:
  \begin{agda}
fᵇ = facet sec Bool ℓ* (return sec b)
                       (return sec false)
  \end{agda}
  For any client function \ai{o}, parametric in \ai{MultefInterface}:
  \begin{agda}
o : (m : MultefInterface) → Fac m Bool → Fac m Bool,
  \end{agda}
  which does not crash under the standard semantics (\ai{std}) when given the
  non-faceted constant \ai{b} as input:
  \begin{agda}
o std (just b) ̸≡ nothing,
  \end{agda}
  then running \ai{o std (just b)} yields the same result from the point of
  view of an observer at level \ai{ℓ*} as running \ai{o} with the faceted
  semantics (\ai{sec}) on input \ai{fᵇ}:
  \begin{agda}
   o std (just b) ≡ just Bool (o sec fᵇ ℓ*)
\end{agda}
\end{theorem}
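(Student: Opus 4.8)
The plan is to obtain transparency as another free theorem, reusing the machinery of \cref{sec:NIproofs} but instantiating \cref{thm:abstraction} at \emph{two different} implementations of \texttt{MultefInterface}: the standard semantics \texttt{std} on the left and the faceted semantics \texttt{sec} on the right. Concretely, I would build a relational witness $\mathit{stdsec}_r : \rel{\texttt{MultefInterface}}\ \texttt{std}\ \texttt{sec}$ whose carrier relation on faceted values relates a (possibly failing) standard result $m : \texttt{Maybe}\ A_0$ to a faceted value $f$ precisely when, \emph{if} $m$ is a \texttt{just}, it agrees with the $\ell^*$-projection of $f$:
\[
\mathit{Fac}_r\ A_0\ A_1\ A_r\ m\ f \;=\; (a : A_0) \to m \equiv \texttt{just}\ a \to A_r\ a\ (\mathit{project}\ \ell^*\ f).
\]
The \texttt{nothing} case is simply left unconstrained, which is what lets this single relation cope with the partiality of the standard semantics.

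Next I would discharge the field obligations for $\mathit{stdsec}_r$, whose shapes are those of \cref{fig:RMultef} with \texttt{std} replacing the left \texttt{sec}. The \texttt{return} witness reduces to injectivity of \texttt{just} together with $\mathit{project}\ \ell^*\ (\texttt{return}\ a_1) = a_1$, and the \texttt{facet} witness is vacuous, since $\texttt{facet}$ under \texttt{std} is \texttt{nothing}, so its premise $\texttt{nothing} \equiv \texttt{just}\ a$ is absurd. The only case needing real work is \texttt{bind}: after a case split on the standard operand (the \texttt{nothing} subcase again vacuous), the \texttt{just}~$a$ subcase requires me to show that $\mathit{project}\ \ell^*$ commutes with faceted \texttt{bind}, i.e. the key lemma, proved by a routine induction on the faceted tree $f$ with a case analysis on whether the node label flows to $\ell^*$:
\[
\mathit{project}\ \ell^*\ (\texttt{bind}\ \texttt{sec}\ f\ c) \;\equiv\; \mathit{project}\ \ell^*\ (c\ (\mathit{project}\ \ell^*\ f)).
\]
This lemma --- stating that $\mathit{project}\ \ell^*$ is a monad morphism from the faceted monad to the identity monad --- is the one genuinely computational ingredient; everything else is parametricity plumbing.

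With $\mathit{stdsec}_r$ in hand, I would invoke \cref{thm:abstraction} on \texttt{o} to get $\rel{\texttt{o}}$ instantiated at $m_0 = \texttt{std}$, $m_1 = \texttt{sec}$, $m_r = \mathit{stdsec}_r$. Feeding it the inputs \texttt{just b} and $f^b$ demands a proof that they are $\mathit{Fac}_r$-related, which unfolds to $b \equiv \mathit{project}\ \ell^*\ f^b$; this holds because $\ell^* \sqsubseteq \ell^*$ by reflexivity of the lattice order, so $\mathit{project}\ \ell^*\ f^b$ computes to $b$. Taking $A_r$ to be propositional equality on \texttt{Bool}, the conclusion of $\rel{\texttt{o}}$ then specializes to: whenever $\texttt{o std (just b)} \equiv \texttt{just}\ x$, we have $x \equiv \mathit{project}\ \ell^*\ (\texttt{o sec}\ f^b)$.

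Finally, I would invoke the non-crash hypothesis $\texttt{o std (just b)} \not\equiv \texttt{nothing}$: since the result lives in $\texttt{Maybe Bool}$, it must be $\texttt{just}\ x$ for some $x$, and plugging this into the specialized relation yields $x \equiv \mathit{project}\ \ell^*\ (\texttt{o sec}\ f^b)$, whence $\texttt{o std (just b)} = \texttt{just}\ x = \texttt{just}\ (\mathit{project}\ \ell^*\ (\texttt{o sec}\ f^b))$ --- exactly the desired equation. The main obstacle is the commutation lemma above, but since it is a simple structural induction, I expect it to be a modest one, and the rest of the argument to follow essentially for free from the abstraction theorem.
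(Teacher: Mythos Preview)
Your proposal is correct and follows essentially the same route as the paper: instantiate parametricity at \texttt{std} on the left and \texttt{sec} on the right, choose a carrier relation that is vacuous when the standard result is \texttt{nothing} and otherwise demands agreement with the $\ell^*$-projection, and conclude via the non-crash hypothesis. Your $\mathit{Fac}_r$ is a cosmetic reformulation of the paper's choice (the paper writes it as $f_0 \not\equiv \texttt{nothing} \to \rel{\texttt{Maybe}}\,A_0\,A_1\,A_r\,f_0\,(\texttt{just}\,(\mathit{project}\,f_1\,\ell^*))$, which is logically equivalent), and the $\mathit{project}$/\texttt{bind} commutation lemma you single out is exactly the content the paper sweeps under ``filling out \ai{facetᵣ}, \ai{returnᵣ}, and \ai{bindᵣ} is straightforward''.
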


To prove \cref{thm:transparency} we need to relate the execution of
\ai{o std} with the execution of \ai{o sec}.
According to our setup, \cref{thm:abstraction} gives us that if
\ai{o : (m : MultefInterface) → T} for some type \ai{T}, then:
\begin{agda}
⟦o⟧ : (m₀ m₁ : MultefInterface)
   → (mᵣ : ⟦MultefInterface⟧ m₀ m₁)
   → ⟦T⟧ (o m₀) (o m₁)
\end{agda}
In particular, if we can provide some \ai{std-secᵣ : ⟦MultefInterface⟧ std sec},
then parametricity lets us relate \ai{o std} and \ai{o sec}.

As we have seen with the previous proofs in this paper, the key to getting this
to work is picking the correct instantiation of \ai{Facᵣ} in \ai{std-secᵣ}.
In this case the choice is clear from the theorem that we are trying to prove.
We want that, when the standard (non-faceted) result is not \ai{nothing}, the
projection at \ai{ℓ*} of the value resulting from the faceted
execution is related to the standard value.
In other words, we pick the following definition of \ai{Facᵣ} in \ai{std-secᵣ}:
\begin{agda}
Facᵣ std-secᵣ = λ A₀ A₁ Aᵣ f₀ f₁ → f₀ ̸≡ nothing →
  ⟦Maybe⟧ A₀ A₁ Aᵣ f₀ (just (project f₁ ℓ*))
\end{agda}
From this definition, filling out \ai{facetᵣ}, \ai{returnᵣ}, and \ai{bindᵣ} is straightforward.
With the definition of \ai{std-secᵣ} in place, we can tackle the proof of \cref{thm:transparency}.

\begin{proof}[Proof of \cref{thm:transparency}]
  From:
  \begin{agda}
o : (m : MultefInterface) → Fac m Bool → Fac m Bool
  \end{agda}
  We obtain by parametricity:
\begin{agda}
⟦o⟧ : (m₀ m₁ : MultefInterface)
   → (mᵣ : ⟦MultefInterface⟧ m₀ m₁)
   → (f₀ : Fac m₀ Bool) → (f₁ : Fac m₁ Bool)
   → (fᵣ : Facᵣ mᵣ Bool Bool ⟦Bool⟧ f₀ f₁)
   → Facᵣ mᵣ Bool Bool ⟦Bool⟧ (o m₀ f₀) (o m₁ f₁)
\end{agda}
  We pick \ai{m₀ = std}, \ai{m₁ = sec}, \ai{mᵣ = std-secᵣ}, \ai{f₀ = just b}, \ai{f₁ = fᵇ},
  and let \ai{fᵣ} be some (omitted) proof that \ai{f₀} and \ai{f₁} are appropriately
  related\footnote{The reader will find its definition in the Agda mechanization of this paper}
\begin{agda}
fᵣ : Facᵣ std-secᵣ Bool Bool ⟦Bool⟧ f₀ f₁,
\end{agda}
whose type becomes this after unfolding definitions:
\begin{agda}
fᵣ : just b ̸≡ nothing →
       ⟦Maybe⟧ Bool Bool _≡_ (just b)
                             (just (project fᵇ ℓ*)).
\end{agda}
  This gives us the following instantiation for \ai{⟦o⟧}:
  \begin{agda}
⟦o⟧ std sec std-secᵣ (just b) fᵇ fᵣ :
  o std (just b) ̸≡ nothing →
    ⟦Maybe⟧ Bool Bool _≡_ (o std (just b))
                       (just (project (o sec fᵇ) ℓ*)).
  \end{agda}
  This is sufficient to easily establish the theorem.
\end{proof}

The key takeaway from this proof is the generality of parametricity as a proof
technique.
While previous proofs have focused on the connection between noninterference
and parametricity
\cite{SimpleNoninterferenceFromParametricity,bowman2015noninterference,ngo2019typed},
the proof above shows that parametricity can also be useful for proving other
interesting meta-theoretical properties about security libraries.

\ifsooner
\ch{Another example of this could be to do confinement/isolation for LIO.
  Not sure whether we'll have time for this though.}
\fi

% LocalWords:  noninterfering

%% \section{\label{sec:effects}Supporting Effects}
%% \input{Effects.tex}

\section{Related and Future Work}
% \ch{The future work currently split / repeated between this and the next
%   section. Having future work only with the conclusion would be more standard.}

Dynamic IFC libraries, like LIO \cite{LIO} and Multef~\cite{FSME},
promise to provide noninterference guarantees without the need for a specialized tool-chain.
Embedding such libraries in existing languages allows programmers to reuse the
functionality and library ecosystem of the host language.
Case studies show that this is a promising direction for IFC~\cite{HAILS, LIO, parker2019lweb}.

\citet{StefanMMR17} provide a noninterference proof for LIO partially mechanized
in Coq, over which we improve in several ways.
First, we cover a larger subset of LIO, by additionally supporting recoverable
exceptions~\cite{StefanMMR17, hritcu2013all}, thrown by clients of by the IFC
mechanism itself.
Second, our complete proof fits in just 360 lines of Agda, while the proof of
noninterference of \citet{StefanMMR17} is more than 3000 lines of Coq.
We achieve this order-of-magnitude conciseness improvement by using
logical-relations to reason about LIO and by relying on \cref{thm:abstraction}
to automatically derive the relational interpretation of the ``standard'' parts
of the language (crucially $\lambda$-abstraction): a large part of this proof
one would traditionally need to carry out manually.
%
% \ma{I don't think we need to talk about their 20 admit statements, or do we?}

Third, and importantly, we certify \emph{an implementation of the library in
terms of type abstraction}, rather than a model of the library that ignores
it--an issue which affects much of the library-based IFC literature.
This issue is not surprising, because in order to be practical the libraries are
implemented in languages such as Haskell which are lacking a formal semantics
based on logical relations.
% \jpb{one could argue that this is not really true for
% Haskell.}\ma{Yes and no, they often rely on the IO parts of Haskell, for which
% there is not a good relational interpretation}
%
Formalizing library-based IFC would require developing such a semantics first
--- a daunting task for a full-featured language.
In contrast, we use a less mainstream language, Agda, but, in return, there is
no informal, unverified, modeling step remaining in our approach.

To be sure, this means that we also do not claim to provide guarantees
for the \emph{Haskell} implementations of the libraries, but rather the
Agda implementations in this paper.
Specifically, this means that we only protect against Agda attackers, not the
potentially more powerful Haskell attackers that previous approaches claim to
defend against.
However, we believe this paper demonstrates that as techniques for reasoning
about parametricity for languages like Haskell are developed, the necessary
meta-theory for practical IFC libraries will come within reach of our
techniques.
%
% \ch{Could potentially add: Alternatively, a secure compiler~\cite{AbateBGHPT19,
% NoninterferenceForFree} between Agda and Haskell could also be used to preserve
% the noninterference property we establish against Agda attackers also against
% Haskell ones.}
% CH: just that this seems like a detour, and if NI is enforced using type
%     abstraction in the target then the problem doesn't get any easier.
% original TODO
%  + work out secure compilation-like extraction from Agda to Haskell;
%    show that protecting against all Agda attackers is enough to get
%    guarantees also against all Haskell ones (context back-translation)
%
\iffull
\ch{Worth mentioning in passing that Haskell is also getting ``real''
  dependent types?}
\fi

Developing tools to cover the full Haskell language requires some innovation
over the Agda parametricity theorems we rely on in this paper.
Specifically, Haskell has a number of features not covered by the existing
theory for Agda, including IO, concurrency, lazyness, and mutable state.
Correctly reasoning about these features may require using a more expressive
ambient logic than what is afforded by having just a dependently typed
language like Agda.
For example, separation logic
% \cite{reynolds2002separation} -- Iris, not Reynold's separation logic
has been used in
previous work to provide parametricity-like reasoning for a subset of the
langauge features needed to cover the full Haskell language
\cite{10.1145/3341709}.

%% \ch{No clue what's the connection between parametricity on the one hand and
%%   concurrency and whatnot on the other. This makes the argument very fuzzy for
%%   me starting here. To me introducing concurrency and whatnot seems a red herring.}%
%% Specifically, there are a number of Haskell libraries for IFC that rely on
%% complicated language features for controlling concurrency
%% \cite{stefan2012addressing,MACJournal} and parallelism
%% \cite{10.1007/978-3-030-17138-4_1} and specialized primitives for managing
%% laziness \cite{buiras2013lazy}.
%% %
%% In order to be able to reason about the implementation of IFC libraries that use
%% these features, meta-theoretical tools akin to the parametricity theory that we
%% rely on will need to be developed for Haskell, including, for example,
%% the extensions to the language that are used in \cite{buiras2013lazy}.\ch{This would equally be the case
%%   even without concurrency and these features though!}

Another difference between Haskell and Agda is that Agda is a total language, so
all code written in it is guaranteed to terminate and so the termination channel
\cite{hedin2012perspective} is a non-issue.
Extending our techniques to deal with non-termination, and specifically to deal
with libraries that protect against termination leaks (\EG
\cite{stefan2012addressing,FSME}), is interesting future work.
In particular, we hope to build on various monadic representations of
non-termination in dependent type theory \cite{McBride15, XiaZHHMPZ20}.

% , also requires developing a parametricity
% framework for more practical languages than Agda.

%% \citet{10.1145/3290389} develop a cross-language logical relation
%% between an LIO-like coarse-grained and a fine grained dynamic IFC language.
%% %
%% However, as they are concerned with the correspondence between coarse-grained
%% and fine-grained IFC, their work is orthogonal to the security of library-based
%% implementations of dynamic IFC.

Last but not least, as already mentioned in the introduction,
\citet{SimpleNoninterferenceFromParametricity} introduced the technique of using
parametricity for dependent types to show noninterference for a security library.
They are concerned with \emph{static} IFC libraries, while we show that the
technique extends naturally to libraries for \emph{dynamic} IFC libraries.
Furthermore, in \cref{thm:transparency} we show that the parametricity
proof-technique works well for proving meta-theoretical properties other
than noninterference for security libraries using this approach.

% \ma{We should probably say something about Bowman and Ahmed} -- Max: no
 
% LocalWords:  StefanMMR

\section{Conclusion}
This paper shows the versatility of parametricity as a proof technique for
language-based security.
Specifically, we show that parametricity can be used to prove noninterference
for two different dynamic IFC libraries as well as transparency for the faceted values one.
Ours are the first proofs of noninterference for the implementation of dynamic
IFC libraries in terms of type abstraction.
%
% Previous proofs of noninterference for dynamic IFC libraries (e.g. \cite{LIO, FSME})
% are about separate lambda-calculus that mimics the semantics of the
% library, rather than about the library itself. -- CH: dropped, spurious
%
Furthermore, using parametricity allows us to give compact yet fully
machine-checked proofs.

We believe that the simplicity of the proof techniques used in this paper will
be key to scaling noninterference proofs to cover the actual code of
feature-rich IFC libraries, including for instance mutable state \cite{BuirasSR14, StefanMMR17},
concurrency \cite{stefan2012addressing, buiras2013lazy, buiras2013library},
parallelism \cite{10.1007/978-3-030-17138-4_1}, and cryptography \cite{CLIO}.

\section*{Acknowledgements}

We are grateful to the CSF 2021 reviewers for their helpful feedback.
Algehed was partially supported by the Wallenberg AI, Autonomous
Systems and Software Program (WASP) funded by the Knut and Alice Wallenberg
Foundation and The Osher Endowment through project Ghost: Exploring the
Limits of Invisible Security.
Bernardy is partially supported by the Swedish Research Council Grant
No.~2014-39, which funds the Centre for Linguistic Theory and Studies
in Probability (CLASP) in the Department of Philosophy, Linguistics,
and Theory of Science at the University of Gothenburg.
Hri\cb{t}cu was in part supported by the
\href{https://erc.europa.eu}{European Research Council} under
\href{https://secure-compilation.github.io}{ERC Starting Grant SECOMP} (715753).

\bibliographystyle{IEEEtranN}
\bibliography{main}

\end{document}